\documentclass[12pt]{article}
\usepackage[margin=1in]{geometry}
\usepackage[utf8]{inputenc}
\usepackage[T1]{fontenc}
\usepackage{times}

\usepackage{diagbox}
\usepackage{soul}
\usepackage{url}
\usepackage[hidelinks]{hyperref}
\usepackage{amsmath,amsthm,amssymb}
\usepackage{thmtools,thm-restate}
\usepackage{booktabs}
\usepackage{helvet}  
\usepackage{courier}  
\usepackage{graphicx}  
\usepackage{nicefrac}

\usepackage{amsfonts,pifont}
\usepackage{footnote}
\usepackage[numbers]{natbib}
\usepackage{cleveref}
\usepackage{mathtools}

\usepackage{algorithm}
\usepackage[noend]{algpseudocode}
\usepackage{arydshln}
\usepackage{boxedminipage}
\usepackage{calrsfs}
\usepackage{comment}
\usepackage{enumitem,lipsum}
\usepackage{tikz}
\usepackage{adjustbox}
\usetikzlibrary{positioning,shapes,arrows,graphs}
\usepackage[flushleft]{threeparttable}
\usepackage[textwidth=41mm,textsize=small]{todonotes}
\usepackage{wrapfig}
\usepackage{xspace}
\usepackage{pifont}     
\usepackage[normalem]{ulem}
\usepackage{authblk}
\urlstyle{same}
\usepackage{pgfplots}

\newtheorem{definition}{Definition}
\newtheorem{lemma}{Lemma}

\newtheorem{theorem}{Theorem}
\newtheorem{example}{Example}
\newtheorem{remark}{Remark}

\newenvironment{FitToWidth}[1][\textwidth]{
\begin{adjustbox}{width=#1,center}
}
{\end{adjustbox}}

\newcommand{\rurl}[1]{
  \href{https://#1}{\nolinkurl{#1}}
}

\definecolor{MyGreen}{rgb}{0, 0.7, 0}
\definecolor{MyRed}{rgb}{0.8, 0, 0}

\newcommand{\myOmit}[1]{}

\newcommand{\SW}{\operatorname{SW}}
\newcommand{\maxpe}{\operatorname{MaxPE}}
\newcommand{\pe}{\operatorname{PE}}

\usepackage{color}

\interfootnotelinepenalty=1000

\begin{document}
\title{Coordinating Monetary Contributions in \\Participatory Budgeting}

\author[1]{Haris Aziz\thanks{haris.aziz@unsw.edu.au}}
\author[2]{Sujit Gujar\thanks{sujit.gujar@iiit.ac.in}}
\author[2]{Manisha Padala\thanks{manisha.padala@research.iiit.ac.in}}
\author[1]{\\Mashbat Suzuki\thanks{mashbat.suzuki@unsw.edu.au}}
\author[1]{Jeremy Vollen\thanks{j.vollen@unsw.edu.au}}
\affil[1]{UNSW Sydney}
\affil[2]{IIIT Hyderabad}
\date{}

\maketitle
\begin{abstract}
 We formalize a framework for coordinating funding and selecting projects, the costs of which are shared among agents with quasi-linear utility functions and individual budgets. Our model contains the classical discrete participatory budgeting model as a special case, while capturing other useful scenarios. We propose several important axioms and objectives and study how well they can be simultaneously satisfied. We show that whereas welfare maximization admits an FPTAS, welfare maximization subject to a natural and very weak participation requirement leads to a strong inapproximability. This result is bypassed if we consider some natural restricted valuations, namely laminar single-minded valuations and symmetric valuations. Our analysis for the former restriction leads to the discovery of a new class of tractable instances for the Set Union Knapsack problem, a classical problem in combinatorial optimization.
\end{abstract}

\section{Introduction}
\label{sec:intro}

\emph{Participatory budgeting} (PB) is an exciting grassroots democratic paradigm in which members of a community collectively decide on which relevant public projects should be funded~(see, e.g., \citep{AzSh20a,shah2007,VNS20a}).
The funding decisions take into account the preferences and valuations of the members. 

{The positive influence of PB is apparent in the many implementations across the globe at the country, city and community level. For example, a PB scheme in the Govanhill area of Glasgow, Scotland empowered local residents to direct funds towards imperative projects like addiction family support groups, a community justice partnership, and refurbishment of locally significant public baths.\footnote{\rurl{local.gov.uk/case-studies/govanhill-glasgow}} In the 2014-15 New York City PB process, 51,000 people voted to fund \$32 million of neighbourhood improvements.\footnote{\rurl{hudexchange.info/programs/participatory-budgeting}} As both examples demonstrate, efficient use of public funds and a significant improvement in community member involvement are two of the foremost advantages of PB processes around the world. 
}

{In all implementations of PB that we are aware of, the process relies upon a central authority to determine and provide a budget to fund projects. In practice, this requirement excludes groups who lack the institutional structure required to pool resources from initiating a PB process. For example, several neighbouring municipalities may wish to collaborate on funding projects which can benefit residents from multiple communities simultaneously. On a smaller scale, a group of flatmates may need to decide which furnishings and appliances to buy. Or, a number of organisations co-hosting an event may need to select a list of speakers, each of which charges their own fee. In each of these cases, while PB seems a natural process of arriving at a mutually beneficial outcome, the classical discrete PB model is insufficient because PB's focus is on project selection and not efficient resource pooling. In this paper, we present a framework which captures both of these components simultaneously.}

{Furthermore, in the classical PB model, it is typical to assume that agents' utilities depend only on their valuations for selected projects and that agents are indifferent toward the amount of the budget used. While this is appropriate in that setting because the agents do not necessarily believe leftover budget will benefit them, this is not the case in our setting since agents can use their leftover funds directly. For this reason, we model agents with utilities dependent upon their contributions (i.e. \textit{quasi-linear utilities}).
}

In summary, we consider a flexible and general framework which we term \textit{PB with Resource Pooling} in which (i) agents can have their own budgets, which can be directed to fund desirable projects, and (ii) agents care both about which projects are funded, and how much monetary contribution they make due to quasi-linear utilities. We point out that our framework captures the classical discrete PB model in the following way: an artificial agent can be introduced with budget equal to the central budget and valuations for project bundles equal to their total costs.

Taking cues from what we see as the key successes of PB implementations, we primarily focus on the \emph{efficiency} of the outcome while ensuring each involved agent benefits from the outcome and is thus incentivized to participate in the process. Towards this, we study the possibility of achieving optimal \emph{utilitarian welfare} alongside a very basic participation notion, namely \emph{weak participation}, that guarantees positive utility to all of the agents involved. 

\paragraph{\textbf{Contributions}}
Our first contribution is a meaningful model of collective decision making that connects various problems including participatory budgeting, cost sharing problems, and crowd-sourcing. The model can also be viewed as a bridge between voting problems and mechanism design with money.

{In \Cref{sec:prelim}, we lay the groundwork for axiomatic research for the problem by formalizing meaningful axioms that capture efficiency and participation incentives. To capture a PB program which is both useful and sustainable, we focus on mechanisms that satisfy \textit{weak participation}, which requires that every agent benefits from participation.}

In \Cref{sec::adp}, we show that whereas welfare maximization admits a \textit{fully} \textit{polynomial time approximation scheme} (FPTAS)\footnote{An algorithm which approximates the optimal solution by a factor of at least $1-\epsilon$ in time polynomial in the instance size and $1/\epsilon$ for any $\epsilon>0$.}, welfare maximization subject to weak participation leads to a strong inapproximability for as few as two agents. 
We then show that the same objective is inapproximable even for the case with identical project costs despite there being an exact, polytime algorithm for welfare maximization. 

In \Cref{sec:laminar_sm}, we give an FPTAS for welfare maximization subject to participation in the single-minded setting with laminar demand sets, utilizing a result from the \textit{knapsack with conflict graphs} problem. 
We use the insight from this argument to reveal a tractable case for the \textit{set-union knapsack problem}, making a novel contribution to a well-studied generalization of the knapsack problem (\Cref{app:sukp}). 
We also present a polytime algorithm for \emph{symmetric valuations} (\Cref{sec:symm}) and experimentally demonstrate that a greedy approach achieves close to optimal welfare on an average basis, even when subjected to our participation requirement (\Cref{sec:experiments}). 
Due to space constraints, we defer some proofs to the appendix. Our results are summarized in \Cref{tab:summary}. 

\begin{table*}
    \centering
    \caption{Summary of our computational results}  
    \scalebox{0.9}{
    \begin{threeparttable}
    \begin{tabular}{c|l|l}
        \toprule
         & Welfare Maximization &  Welfare Maximization subject to   \\   & & \ \ \ \ \ \ \ \ \ \ Weak Participation \\
        \midrule
        General case & NP-hard; FPTAS & Inapproximable* for any $n\geq 2$\\
        Identical costs & Polynomial-time exact & Inapproximable* \\
        Laminar Single-minded valuations & NP-hard & FPTAS \\
        Symmetric valuations & Polynomial-time exact & Polynomial-time exact \\
        \bottomrule
    \end{tabular}
    \begin{tablenotes}
      \item[*] Inapproximability results hold assuming $P\neq NP$.
    \end{tablenotes}
    \end{threeparttable}}
    \label{tab:summary}
\end{table*}

\section{Related Work}
\noindent\textbf{Funding Public Goods/Projects.}
Many classical papers from public economics focus on mechanisms for public good provision \citep{samuelson1954pure,Grov73a}. 
However, they are primarily concerned with incentives and mostly focus on divisible public goods \citep{GrLa79a}, a setting which differs from ours significantly. 
\citet{BHW19a} presented a mechanism with quasi-linear utilities, but they consider divisible goods and do not include individual budgets. 
\citet{BBGPSS22} considered a simple model without costs and without quasi-linear utilities. 
The model is more suitable for making donations to long-term projects. 
Similarly, \citet{wagnervcg} considered projects without fixed value and cost. 
They use the VCG mechanism to elicit agents' preferences truthfully and consider a different utility model from ours. 
\citet{AzGa21a} considered a model that allows for personal contributions and costs. 
However, their perspective is that of charitable coordination without the quasi-linear utility assumption. 

\noindent\textbf{Discrete Participatory Budgeting.} 
The model we formalize has some connections with discrete PB models~\citep{AzSh20a,ALT18a,NtPf19,PaTa20} in which agents do not make contributions. 
\citet{BNPS17a} and \citet{lu2011budgeted} study various aspects of preference elicitation in PB. 
These models generalize multi-winner voting~\citep{EFSS17a}. 
The special case of our model with identical project costs is similar to multi-winner elections with a variable number of winners ~\citep{Kilg16a}, with increased generality due to individual budgets and quasi-linear utilities. 
Some research has considered PB settings in which agents express budget constraints in addition to preferences \citep{Hersh2021,chen2022participatory}. 
Unlike in these works, the agents in our model care how much of their budget is used, better capturing the budgeting trade-offs considered by agents with limited resources.

\noindent\textbf{Cost-sharing/Crowdfunding.} 
PB without budget constrained agents overlaps significantly with the study of cost-sharing mechanisms \citep{moulin1994serial,tim2018}, an area focused on designing truthful and efficient mechanisms for sharing the cost of availing a certain service or project among the members. 
Specifically, PB can be likened to a cost-sharing mechanism  with non-rivalrous projects, i.e. each project has a fixed  cost \citep{ohseto2000,guo}. 

Some work in this literature considers truthful mechanisms for combinatorial cost sharing across multiple projects \citep{birmpas2019cost,shahar17}. 
\citet{birmpas2019cost} focused on symmetric submodular valuations and provided a mechanism which approximates social cost. 

The social cost considered is equivalent to the quasi-linear utility that we consider in our setting, but these works focus on agents without budget restrictions.
Additionally, utility-based PB bears resemblance to certain civic crowdfunding models \citep{damle2019civic,yan2021optimal,zubrickas2014provision}. 
However, these papers consider neither budget constraints nor the multiple project case. In this line of work, the goal is to analyze what happens at equilibrium when the agents are strategic, as opposed to our own goal of determining a welfare maximizing subset of projects to fund provided budget constraints.

\section{Preliminaries}
\label{sec:prelim}
 
In this work, we provide a framework for collective funding decisions in which agents derive quasi-linear utility and lack an exogenously defined shared budget. 
The setting has the following components.

\subsection{Model}

An instance of \textit{PB with Resource Pooling} contains a set of participating agents $N=[n]$ and a set of projects, $M=[m]$.
$\mathbf{C} = (C_j)_{j\in M}$ denote the project costs. Abusing notation slightly, we let $C(S)=\sum_{j\in S} C_j$ for any $S\subseteq M$. Each agent $i$ has valuation $v_i: 2^M \rightarrow \mathbb{R}_{\ge 0}$ and  a budget $b_i$, which is the maximum amount they are willing to contribute. We denote $\mathbf{v} = (v_i)_{i\in N}$ and $\mathbf{b} = (b_i)_{i\in N}$. We assume $v_i$ to be monotonic, that is $v_i(S) \leq v_i(T), S\subseteq T$. Agent $i$'s valuation of a single project $j$ is denoted by $v_{ij}$ for ease. We assume that the overall valuation of each project is higher than its cost, i.e. $\sum_{i\in N} v_{ij} \ge C_j, \ \forall j\in M$, since it will be trivially excluded otherwise.

 In summary, a PB with Resource Pooling instance is a tuple $I = \langle N, M, \mathbf{v}, \mathbf{b}, \mathbf{C} \rangle$. 
 
 A mechanism takes an instance $I$ and computes an outcome $(W, \mathbf{x})$, comprising a set of funded projects $W\subseteq M$ and a vector of agent payments $\mathbf{x}=(x_i)_{i\in N}$. Each agent then derives \emph{quasi-linear utility}, i.e. $u_i(W,x) = v_i(W) - x_i$. We close with an illustrative motivating example, which assumes additive valuations.

\begin{example}\label{ex:towns}

  Three towns perform cost-benefit analyses for three proposed projects: a concert hall, a shelter, and a pool. The costs, budgets, and valuations are given in \Cref{tab: example}.
\begin{table}[h!] 
  \centering
  \scalebox{0.9}{
  \begin{tabular}{cc|c|c|c}
  \multicolumn{1}{l}{}            & \multicolumn{1}{l|}{} &
  \multicolumn{1}{l|}{Hall} & \multicolumn{1}{l|}{Shelter} & \multicolumn{1}{l|}{Pool} \\ \cline{2-5} 
  \multicolumn{1}{c|}{}           & \diagbox{Budget}{Cost}           & 5                           & 4    & \multicolumn{1}{c|}{2}   \\ \hline
  \multicolumn{1}{c|}{Town A} & 2               & 2                         & 1                           & \multicolumn{1}{c|}{2} \\ \hline
  \multicolumn{1}{c|}{Town B} & 3               & 1                         & 2                           & \multicolumn{1}{c|}{2} \\ \hline
  \multicolumn{1}{c|}{Town C} & 1               & 4                         & 3                           & \multicolumn{1}{c|}{1} \\ \hline
  \end{tabular}}
  \caption{\label{tab: example} An example of \emph{PB with Resource Pooling} with three agents and three projects. The table records the costs, budgets, and valuations.}
  \vspace{-1em}
  \end{table}

  While Towns A and B have the funds to pay for the pool on their own, they will receive zero utility from doing so. Furthermore, no town can fund either of the other two projects on its own. Together, the three towns have just enough funds to build the shelter and the pool, the result of which will be a strictly positive utility for all towns. Lastly, we point out that, while the towns could instead fund the concert hall and generate positive social welfare, Town B will be required to pay at least 2, which exceeds its valuation and it will thus receive negative utility.
\end{example}

\subsection{Desirable {Axioms}}
\label{sec:dp}

We now formalize basic feasibility criteria and desirable properties related to participation incentives and efficiency.

\noindent\textbf{Feasibility.} We say that an outcome is \emph{feasible} if (1) no agent's payment exceeds their individual budget, i.e. $x_i\leq b_i$ for all $i\in N$, and (2) it is weakly budget balanced, a property which requires an outcome does not run a deficit and which we will now define.

\begin{definition}[Weakly Budget Balanced (WBB)] 
    An outcome $(W,\mathbf{x})$ is WBB if $C(W) \leq \sum_{i\in N} x_{i}$. When $C(W) = \sum_{i\in N} x_{i}$, we say the outcome is Budget Balanced (BB). 
\end{definition}

Henceforth, we focus our attention on outcomes which satisfy these basic requirements and will thus refer to feasible outcomes merely as outcomes.

\noindent\textbf{Participation.} We now turn our attention to properties that incentivize agents to participate in the mechanism. Weak Participation (WP) ensures that each agent obtains non-negative utility from the mechanism.
\begin{definition}[Weak Participation (WP)]
    An outcome $(W,\mathbf{x})$ is WP if $\forall i \in N$, $u_i(W, \mathbf{x}) \ge 0$, i.e., $v_i(W) \ge x_i$. 
\end{definition}

We note that WP is a minimal requirement and considerably weaker than 
Individual Rationality (IR), which ensures that the utility each agent obtains from the mechanism is at least the maximum utility they can obtain with only their budget. 
In other words, whereas IR captures individual utility maximization, WP ensures that no agent is making a loss. 

However, the gap between the social welfare of the best WP outcome and that of the best IR outcome can be arbitrarily large (refer to \Cref{ex:sacrifice}). Furthermore, checking whether an outcome is WP is in P, whereas checking whether an outcome is IR is NP-hard, even for a single agent. Hence, we will focus on welfare maximization within the space of WP outcomes. The following lemma illustrates that WP and feasibility can be captured simultaneously by a single inequality.

\begin{lemma}
\label{lem: WP-WBB}
  A set of projects $W\subseteq M$ can be funded in a feasible and WP manner if and only if 
  $$
  C(W) \leq \sum_{i\in N} \min(b_i,v_i(W) )
  $$ 
\end{lemma}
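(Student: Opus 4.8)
The plan is to recognize that ``$W$ can be funded in a feasible and WP manner'' is precisely the statement that there exists a payment vector $\mathbf{x}$ satisfying a collection of pointwise constraints together with one aggregate constraint, and to exploit the fact that the pointwise constraints decouple across agents. Concretely, a feasible WP outcome $(W,\mathbf{x})$ requires, for each agent $i$, that $0 \le x_i \le b_i$ (a non-negative payment within budget) and $x_i \le v_i(W)$ (weak participation), together with the single aggregate inequality $C(W) \le \sum_{i\in N} x_i$ (weak budget balance). I would first collapse the per-agent conditions into the single box constraint $0 \le x_i \le \min(b_i, v_i(W))$.

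For the forward direction, I would take any payment vector $\mathbf{x}$ witnessing that $W$ is fundable. Since each coordinate satisfies $x_i \le \min(b_i, v_i(W))$, summing over all agents and invoking weak budget balance gives $C(W) \le \sum_{i\in N} x_i \le \sum_{i\in N} \min(b_i, v_i(W))$, which is exactly the claimed inequality.

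For the converse, I would exhibit an explicit witness: set $x_i = \min(b_i, v_i(W))$ for every $i$. By construction each payment respects the budget ($x_i \le b_i$) and weak participation ($x_i \le v_i(W)$), so the outcome satisfies feasibility condition (1) and WP. The hypothesized inequality then yields $\sum_{i\in N} x_i = \sum_{i\in N} \min(b_i, v_i(W)) \ge C(W)$, establishing weak budget balance and hence feasibility in full.

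The whole argument reduces to a one-dimensional feasibility check: because the upper bounds on the $x_i$ are independent across agents, the largest total contribution attainable under feasibility and WP is exactly $\sum_{i\in N} \min(b_i, v_i(W))$, and $W$ is fundable precisely when this maximum reaches the cost $C(W)$. I do not expect a genuine obstacle here; the only points requiring care are the implicit assumption that payments are non-negative (so that $0$ is the correct lower bound, and the maximal total contribution is attained at the coordinatewise upper bounds), and the observation that if one wants a budget-balanced rather than merely weakly budget-balanced outcome, the constructed payments can be scaled down continuously from $\sum_{i\in N}\min(b_i,v_i(W))$ until the total equals $C(W)$ exactly.
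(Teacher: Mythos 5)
Your proof is correct and follows essentially the same argument as the paper: combine the per-agent constraints $x_i \le b_i$ and $x_i \le v_i(W)$ into $x_i \le \min(b_i, v_i(W))$ and sum under WBB for the forward direction, then exhibit the witness $x_i = \min(b_i, v_i(W))$ for the converse. The extra observations about non-negativity and scaling down to exact budget balance are harmless additions that echo Remark~\ref{rmk:bb} in the paper.
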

\begin{proof}
  Let $W\subseteq M$ be a set of projects which can be funded in a feasible and WP manner. That is, there exists payments $\mathbf{x}$ such that $x_i\leq b_i$ and $x_i \leq v_i(W)$ for each $i\in N$. Combining the two inequalities, $x_i\leq \min(b_i,v_i(W))$ for all $i\in N$. Since $(W,\mathbf{x})$ is WBB, we have that $C(W) \leq \sum_{i\in N}x_i$, and thus $C(W) \leq \sum_{i\in N} \min(b_i,v_i(W) )$, as desired.
    
  As for the other direction, we need only show that there exist payments $\mathbf{x}$ such that $(W,\mathbf{x})$ satisfies feasibility and WP. Consider payments $x_i=\min(b_i, v_i(W))$ for each $i\in N$. The payments are clearly feasible and WBB holds since $C(W) \leq \sum_{i\in N}\min(b_i, v_i(W)) = \sum_{i\in N} x_i$. WP is satisfied since $x_i = \min(b_i, v_i(W))\leq v_i(W)$ for all $i\in N$.
\end{proof}

\noindent\textbf{Efficiency.} We measure the efficiency of the outcome with respect to the utilitarian welfare, i.e. the sum of utilities. Since we model quasi-linear utility for each agent, the utilitarian social welfare (which we refer to simply as social welfare) is given by 
\begin{equation} \label{eq:usw1}
     \SW(W,x) = \sum_{i\in N} u_i(W,x) = \sum_{i\in N} v_i(W) - \sum_{i\in N} x_i 
\end{equation}

The utilitarian welfare optimal (UWO) outcome is the social welfare maximizing feasible outcome.
\begin{definition}[Utilitarian Welfare Optimal (UWO)]
  \label{def:uwo}
    An outcome $(W, \mathbf{x})$ is UWO if it maximizes social welfare
    as given by \Cref{eq:usw1}. 
\end{definition}
    
We define UWO-WP to represent the welfare-optimal outcome among outcomes that ensure WP.
\begin{definition}[Welfare Optimal among Weak Participation (UWO-WP)]
    \label{def:uwo-wp}
    
    An outcome $(W, \mathbf{x})$ is UWO-WP if it is welfare optimal among WP outcomes, i.e. $\forall i\in N, v_i(W) \ge x_i$.
\end{definition}

\begin{remark} \label{rmk:bb}
 For any outcome $(W,x)$ which is WBB, there is an outcome which is BB, maintains WP, and achieves weakly greater social welfare.
\end{remark}

To see why \Cref{rmk:bb} is true, consider for any WBB outcome an alternative outcome with the same project selection, but with some set of agents' payments decreased such that the outcome is BB. By \Cref{rmk:bb} and our objectives stated in Definitions \ref{def:uwo} and \ref{def:uwo-wp}, we can restrict our attention to outcomes which are BB ($\sum_{i\in N}x_i=C(W)$) and refer to social welfare from now on as the following:

\begin{equation} \label{eq:usw2}
     \SW(W) = \sum_{i\in N} v_i(W) - C(W)
\end{equation}

In the next section, we prove that the problem of computing a 
UWO-WP outcome is inapproximable, assuming $P\neq NP$.

\section{Inapproximability of UWO-WP}

\label{sec::adp}
{In this section, we first show the hardness of finding a welfare-optimal project bundle, i.e. UWO. We then show that the same objective with the additional constraint of WP, i.e. UWO-WP, is inapproximable.} Our inapproximability result is striking:
the problem of finding UWO admits an FPTAS but after imposing the very weak requirement of WP, the same problem does not admit any polynomial approximation guarantees, even in the setting with only two agents.

The results in this section focus entirely on the setting with \emph{additive valuations}, under which $v_i(S) = \sum_{j \in S} v_{ij}$ for any $S\subseteq M$. We point out that the negative results that follow also hold in settings with any class of valuations broader than additive (e.g. superadditive, subadditive, general setting). We first make some remarks on computation of UWO (\Cref{def:uwo}). Due to \Cref{eq:usw2}, the UWO problem is given by
\begin{align*}
\max_{W\subseteq M} \ \ \ \  &\sum_{i\in N} v_i(W) - C(W)\\
& s.t. \hspace{1em} C(W) \leq \sum_{i\in N}b_i.  
\end{align*}  

\begin{restatable}{remark}{rmkuwo}
\label{rmk:uwo_fptas}
 The UWO problem is NP-hard even for a single agent. 
 However, in the setting with additive valuations, UWO admits an FPTAS \citep{fptas}.
\end{restatable}

Despite the positive approximation results, approximation guarantees to UWO may be possible only with huge costs to certain agents who incur large disutility, thus motivating the search for outcomes which satisfy WP. However, it is impossible to guarantee an outcome that satisfies both UWO and WP\footnote{Consider an instance with two agents and one project with $C_1=1$, valuations $v_{11}= 2$ and $v_{21}=0$, and budgets $b_1=0$ and $b_2 =1$. The UWO outcome is to fund the project, but this violates WP.}. Thus, we are interested in finding solutions which are welfare optimal among those which satisfy WP, i.e. that are UWO-WP (\Cref{def:uwo-wp}). However, the proof of \Cref{rmk:uwo_fptas} (\Cref{app:uwo_fptas}) can be used to show that finding a UWO-WP allocation is NP-hard, even for one agent.

\subsection{Gap Introducing Reduction for UWO-WP}
\label{subsec:gap_intro}

{As finding UWO-WP is NP-hard, we look for an approximation algorithm.} To see why finding such an approximation algorithm will be difficult, consider the following example.

\begin{example}\label{ex:sacrifice}
\begin{table}[htb]
\centering
\scalebox{0.9}{
  \begin{tabular}{cc|c|c|c|c|}
                               &             & P1 & P2 & P3 & P4 \\ \cline{2-6} 
  \multicolumn{1}{c|}{}        & \diagbox{Budget}{Cost} & 1      & 2    & 1   & 1  \\ \hline
  \multicolumn{1}{c|}{Agent 1} & 2           & 0      & 20     & 2-$\epsilon$    & 2 \\ \hline
  \multicolumn{1}{c|}{Agent 2} & 0           & H      & 0      & 20     & 0 \\ \hline
  \end{tabular}}
\caption{An instance with $n=2$ and $m=4$ where the UWO-WP outcome must contain a unique set of projects (P4) to allow the WP funding of a high-welfare project (P1).}\label{tab:sacrifice}
\vspace{-1em}
\end{table}
Observe that in the example given in \Cref{tab:sacrifice}, agent 1 holds the entire collective budget and maximizes her utility by funding project 2, which can be funded in a WP manner. But, no budget remains to fund other projects if project 2 is selected. Project 3, on the other hand, gives positive utility to both agents and leaves enough budget to fund project 4 in a WP manner (but not project 1). 

However, note that $W=\{1,4\}$ can be funded in a WP manner and achieves $SW(W)=H$. Although project 4 seems inferior to projects 2 or 3 on its own, it must be selected in order to fund project 1. Since H can be arbitrarily high, any algorithm which hopes to give a good approximation of UWO-WP should be able to find the set of projects that makes project 1 affordable if such a project set exists. 
\end{example}

Inspired by \Cref{ex:sacrifice}, where excess money must be extracted optimally to fund valuable projects, we introduce a related problem.

\noindent\textsc{Maximum Excess Payment Extraction} ($\maxpe$):
\begin{align*}
\max_{W\subseteq M} \pe(W) =  \sum_{i\in N} \min(b_i,v_i(W) )- C(W)
\end{align*}  

The above quantity measures the amount of excess money we can extract after paying for the costs of the projects in a WP manner. The decision version of the $\maxpe$ problem asks if $\maxpe\geq t$, that is whether $t$ dollars can be extracted while maintaining WP. Note that the decision version of the problem can be used to compute $\maxpe$ simply by using binary search. The following theorem states that any polynomial approximation for UWO-WP can be used to solve the $\maxpe$ decision problem in polynomial time.

\begin{theorem}\label{thm:gap_reduction}
  Let $f(n,m)$ be a polynomial time computable function. Any polynomial time algorithm with $f(n,m)$-approximation guarantee for UWO-WP  can be used to decide $\maxpe$ in polytime.
\end{theorem}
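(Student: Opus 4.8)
The plan is to give a gap-introducing reduction from the decision version of $\maxpe$ to the problem of approximating UWO-WP, forcing the hypothesized $f(n,m)$-approximation algorithm to distinguish a huge optimum from a small one. Given a $\maxpe$ instance $I=\langle N,M,\mathbf{v},\mathbf{b},\mathbf{C}\rangle$ together with a threshold $t$, I would construct a UWO-WP instance $I'$ as follows: keep all original agents and projects, introduce one fresh project $p^\ast$ with cost $C_{p^\ast}=t$ valued at $0$ by every original agent, and introduce one fresh ``dummy'' agent $i^\ast$ with budget $b_{i^\ast}=0$ whose valuation is $H$ for $p^\ast$ and $0$ for every other project, where $H$ is a large number fixed later. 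Intuitively $p^\ast$ plays the role of the high-welfare project~P1 in \Cref{ex:sacrifice}: it is enormously valuable, yet its cost can only be covered using surplus extracted from the remaining projects in a WP manner.

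The crux is the following equivalence, which I would derive directly from \Cref{lem: WP-WBB}: a bundle $W\cup\{p^\ast\}$ with $W\subseteq M$ can be funded feasibly and WP in $I'$ if and only if $\pe(W)\ge t$ in $I$. Since $b_{i^\ast}=0$ we have $\min(b_{i^\ast},v_{i^\ast}(W\cup\{p^\ast\}))=0$, so the dummy agent contributes nothing to the right-hand side of the feasibility inequality, and the original agents value $p^\ast$ at $0$; hence $C(W\cup\{p^\ast\})\le\sum_{i\in N}\min(b_i,v_i(W))$ rearranges exactly to $\sum_{i\in N}\min(b_i,v_i(W))-C(W)\ge t$. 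Consequently $p^\ast$ is fundable as part of some WP bundle precisely when $\maxpe(I)\ge t$.

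This equivalence yields a clean gap. If $\maxpe(I)\ge t$, pick $W$ with $\pe(W)\ge t$; then $W\cup\{p^\ast\}$ is a WP outcome, and using $\sum_{i\in N}v_i(W)\ge\sum_{i\in N}\min(b_i,v_i(W))\ge C(W)+t$ its welfare is $\SW(W\cup\{p^\ast\})=H+\big(\sum_{i\in N}v_i(W)-C(W)\big)-t\ge H$, so the optimum of $I'$ is at least $H$. If instead $\maxpe(I)<t$, the equivalence shows no WP bundle can contain $p^\ast$, so every WP outcome uses only original projects and has welfare at most $V:=\sum_{i\in N}v_i(M)$, a quantity polynomially bounded in the input. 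Choosing $H:=f(n+1,m+1)\cdot V+1$ --- which is computable in polynomial time and has polynomial bit-length, so $I'$ stays a polynomial-size instance --- forces any algorithm returning at least a $1/f(n+1,m+1)$ fraction of the optimum to output welfare at least $H/f(n+1,m+1)>V$ in the first case and at most $V$ in the second. Thus comparing the returned welfare against the threshold $V$ decides whether $\maxpe(I)\ge t$, with a single call to the approximation algorithm.

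The step I expect to require the most care is pinning down the gap so the two cases are genuinely separated: I must verify that in the ``no'' case no extra welfare leaks in through $p^\ast$ or the dummy agent (handled by the zero-budget argument above), that $V$ is a valid and polynomially bounded upper bound on the optimum once $p^\ast$ is excluded, and that $H$ can be taken large enough to beat $f$ while keeping the reduction polynomial (relying on $f$ being polynomial-time computable). I would also check the routine bookkeeping that $I'$ respects the model's standing assumptions --- in particular $\sum_{i}v_{i,p^\ast}=H\ge t=C_{p^\ast}$ --- so that the constructed tuple is a legitimate instance of the problem.
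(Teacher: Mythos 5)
Your proposal is correct and follows essentially the same gap-introducing construction as the paper: a fresh project of cost $t$ together with a zero-budget dummy agent who values it enormously, so that the new project is WP-fundable precisely when $\maxpe(I)\ge t$, and the approximation guarantee then forces a detectable welfare gap. The only differences are cosmetic --- you set $H=f(n+1,m+1)\cdot V+1$ with threshold $V$ where the paper uses $2f(n+1,m+1)\cdot V$ with threshold $2V$, and you make the fundability equivalence via \Cref{lem: WP-WBB} explicit where the paper argues it in words.
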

  \begin{proof}
      Consider an arbitrary instance $I=\langle N, M, \mathbf{v}, \mathbf{b}, \mathbf{C} \rangle$ of the $\maxpe\geq t$ problem where there are $n$ agents and $m$ projects. Let $\operatorname{Opt}(I)$ be the social welfare obtained from the UWO-WP solution to instance $I$. We know that $\operatorname{Opt}(I)\leq \sum_{i\in N}v_i(M) $. We now create a new instance by adding an agent $0$ and a project $0$ to $I$ such that:
    \begin{itemize}
    \item $v_{00}=(2f(n+1,m+1)) \sum_{i\in N}v_i(M) $, budget $b_0=0$ 
    \item $v_{0j}=0$ for all $j\in M$
    \item $v_{i0}=0$ for all $i\in N$
    \item $C_0 = t  $
    \end{itemize}   
    We denote the transformed instance as $I^t$. It has $n+1$ agents and $m+1$ projects. Note that the transformation is polynomial time as $f(n,m)$ is polynomial-time computable. 
    The modified instance creates a large gap in the optimal social welfare depending on the cost of project $0$, as seen in Figure~\ref{fig:inapprox}. As the figure illustrates, any $f$-approximation algorithm $\operatorname{Alg}$ will maintain such a gap and can thus be used to retrieve information about $\maxpe$. As we will see, this observation can be used to show approximation hardness.
    
    \begin{figure}
    \hspace{-3em}
    \begin{FitToWidth}[0.8\columnwidth]
    \begin{tikzpicture}[blacklinenode/.style={shape=rectangle, inner sep=0pt, minimum height=0.3pt, minimum width=6pt, draw},
              shadednode/.style={shape=rectangle, fill=black!20}]
      \def\x{8}
      \def\y{5}
      
      \draw[thick,->] (0,0) -- (\x,0) node[anchor=north west] {$t$};
      \draw[thick,->] (0,0) -- (0,\y) node[anchor=south east] {$\SW$};

      \draw (2pt, \y/5) -- (-2pt, \y/5) node[anchor=east] {$\sum_{i\in N} v_i(M)$};
      \draw (2pt, 2*\y/5) -- (-2pt, 2*\y/5) node[anchor=east] {$2 \sum_{i\in N} v_i(M)$};
      \draw (0, 4*\y/5) -- (0, 4*\y/5) node[anchor=east] {$2 f(n+1,m+1) \sum_{i\in N} v_i(M)$};
      \draw (\x/2, 2pt) -- (\x/2, -2pt) node[anchor=north] {$\maxpe(I)$};

      \draw[dashed, thick] (\x/2, 0) -- (\x/2, \y) ;

      \draw[thick] (0, 9*\y/10) -- (\x/2, 9*\y/10) ;
      \draw[thick] (\x/2, \y/5 - 0.25) -- (\x, \y/5 - 0.25) ;

      \fill[black!20] (0.02, 2*\y/5) rectangle (\x/2-0.01, 9*\y/10 - 0.01);
      \fill[black!20] (\x/2+0.01, \y/15) rectangle (\x, \y/5 - 0.26);

      \matrix [draw,below left] at (current bounding box.north east) {
        \node [blacklinenode,label=right:$\operatorname{Opt}(I^t)$] {}; \\
        \node [shadednode,label=right:$\operatorname{Alg}(I^t)$] {}; \\
      };
    \end{tikzpicture}
    \end{FitToWidth}
    \caption{The modified instance $I^t$ creates a large gap in optimal social welfare, depending on $t$, as shown by the solid lines. This welfare gap will remain for any $f$-approximation algorithm $\operatorname{Alg}$, as shown by the shaded regions, and illustrates how such an approximation can be used to solve the $\maxpe$ problem.}
    \label{fig:inapprox}
    \end{figure}
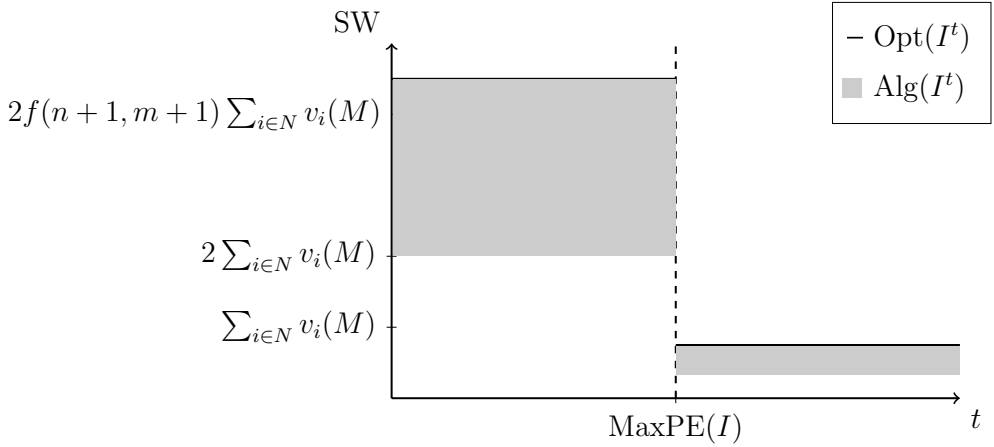

    Consider now a polynomial time algorithm $\operatorname{Alg}$ for UWO-WP with $f$-approximation guarantee, that is 
    \begin{equation}\label{eq: approx}
        f(n+1,m+1)\SW(\operatorname{Alg}(I^t))\geq \operatorname{Opt}(I^t).
    \end{equation}
    
    We now claim that $\maxpe(I)\geq t$ accepts (i.e., $t$ dollars can be extracted from agents in $I$ while satisfying WP) if and only if  $\SW(\operatorname{Alg}(I^t))\geq 2 \sum_{i\in N} v_i(M) $. The forward direction follows since if $t$ dollars can be extracted from agents in $I$ then project $0$ can be afforded. Hence, $\operatorname{Opt}(I^t)\geq (2f(n+1,m+1))\sum_{i\in N} v_i(M)$ since there is a WP solution which funds project zero. Thus, combining with \Cref{eq: approx}, we see that $\SW(\operatorname{Alg}(I^t))\geq 2 \sum_{i\in N} v_i(M)$.
    
    As for the backward direction, first observe that if $\SW(\operatorname{Alg}(I^t))\geq 2 \sum_{i\in N} v_i(M) $ then $0\in \operatorname{Alg}(I^t)$. To see this, suppose on the contrary $0\not \in \operatorname{Alg}(I^t)$. Then,  $\SW(\operatorname{Alg}(I^t))\leq \operatorname{Opt}(I) \leq \sum_{i\in N} v_i(M)$, leading to a contradiction. Thus, we have $0\in \operatorname{Alg}(I^t)$, which implies that the agents in $I$ must be able to pay for project $0$ in a WP manner. This is due to the fact that agent $0$ cannot contribute towards funding project $0$ since she has zero budget, and agents in $I$ do not value project $0$. Hence, $t$ dollars can be extracted from agents in $I$ in a WP manner. Thus, we may use $\operatorname{Alg}$ to decide $\maxpe\geq t$ in polynomial time. 
    \end{proof}

 In the subsections that follow, we use \Cref{thm:gap_reduction} to show that unlike UWO, which admits an FPTAS, UWO-WP does not admit \textit{any} bounded polynomial approximation guarantees in the additive setting, even if we limit the number of agents to two or restrict the costs to be identical.

\subsection{Two Agents with Additive Valuations}
\label{subsec:inapprox_two_agents}

As \Cref{lem:maxpe} will show, $\maxpe$ is NP-hard even for a single agent with additive valuation. The following theorem uses \Cref{lem:maxpe} to demonstrate a strong inapproximability of UWO-WP in the additive setting, even in the very restricted case with two agents.

\begin{theorem}\label{thm:inapprox}
Let $f(m)$ be a polynomial time computable function. Even for two agents, there is no polynomial time $f(m)$-approximation algorithm for the UWO-WP problem, assuming  $P \neq  NP$.
\end{theorem}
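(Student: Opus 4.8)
The plan is to derive \Cref{thm:inapprox} by composing the gap-introducing reduction of \Cref{thm:gap_reduction} with the NP-hardness of $\maxpe$ for a single agent (\Cref{lem:maxpe}). The entire purpose of \Cref{thm:gap_reduction} was to show that an $f$-approximation for UWO-WP is powerful enough to decide $\maxpe \geq t$; here I simply instantiate that reduction starting from a \emph{single-agent} hard instance, so that the transformed instance has exactly two agents.

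Concretely, I would argue by contradiction. Suppose there is a polynomial-time $f(m)$-approximation algorithm $\operatorname{Alg}$ for UWO-WP that works on all two-agent instances, where $f$ is polynomial-time computable. Take an arbitrary single-agent instance $I$ of the $\maxpe \geq t$ decision problem, which is NP-hard by \Cref{lem:maxpe}. Apply the transformation of \Cref{thm:gap_reduction} with $n = 1$: this adds one agent and one project, producing an instance $I^t$ with $n+1 = 2$ agents and $m+1$ projects. Since the number of agents is pinned at two, the guarantee becomes $f(m+1)$, a polynomial-time computable function of the input size, so $\operatorname{Alg}$ applies to $I^t$. By \Cref{thm:gap_reduction}, running $\operatorname{Alg}$ on $I^t$ and testing $\SW(\operatorname{Alg}(I^t))$ against the threshold $2\sum_{i\in N} v_i(M)$ decides whether $\maxpe(I) \geq t$ in polynomial time. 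This solves an NP-hard problem in polynomial time, forcing $P = NP$; contrapositively, assuming $P \neq NP$, no such $f(m)$-approximation algorithm exists.

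The only genuinely substantive ingredient is \Cref{lem:maxpe}, the single-agent hardness of $\maxpe$; everything downstream is bookkeeping. For a single agent the objective is $\max_{W}\min(b, v(W)) - C(W)$, which has a clear subset-sum/knapsack flavor: when $v(W) \le b$ one maximizes $\sum_{j\in W}(v_{j} - C_{j})$ under a packing constraint, and when $v(W) > b$ one minimizes $C(W)$ subject to $v(W) > b$. I would establish \Cref{lem:maxpe} by reducing from a standard NP-complete problem such as \textsc{Subset Sum} or \textsc{Partition}, choosing item values, costs, and budget so that extracting $t$ excess dollars is feasible exactly when the target sum is achievable.

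The step requiring the most care within the theorem proof itself (as opposed to the lemma) is aligning the quantifier on $f$: the hypothesis of \Cref{thm:inapprox} fixes two agents and posits inapproximability for \emph{every} polynomial-time computable $f(m)$, whereas \Cref{thm:gap_reduction} is phrased for $f(n,m)$. Since $n$ is fixed at two throughout, I would note once that $f(n+1,m+1)$ collapses to a function of $m$ alone, so the reduction goes through uniformly for any candidate $f$. This is precisely what upgrades a single hardness statement into the claimed inapproximability for all polynomial approximation factors.
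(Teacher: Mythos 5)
Your proposal is correct and follows essentially the same route as the paper's own proof: assume a polynomial-time $f(m)$-approximation for two-agent UWO-WP, instantiate the gap reduction of \Cref{thm:gap_reduction} on a single-agent $\maxpe \geq t$ instance (so the transformed instance has exactly two agents), and derive a contradiction with the NP-hardness of single-agent $\maxpe$ (\Cref{lem:maxpe}), which the paper likewise proves via a reduction from $\operatorname{PARTITION}$ as you suggest. Your explicit remark that $f(n+1,m+1)$ collapses to a function of $m$ alone once $n$ is fixed is a detail the paper leaves implicit, but it is the same argument.
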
 
\begin{proof}
Suppose, on the contrary, that a polynomial time algorithm exists with  approximation guarantee $f(m)$ for UWO-WP with $n=2$. By the argument used to prove \Cref{thm:gap_reduction}, we see that such an algorithm can be used to solve $\maxpe$ for one agent in polynomial time. However, as we will show in \Cref{lem:maxpe}, $\maxpe$ is {NP-hard} even for one agent, contradicting $P \neq  NP$.
\end{proof}

\begin{lemma}
\label{lem:maxpe}
It is NP-hard to decide $\maxpe\geq t$, even for a single agent with additive valuation. 
\end{lemma}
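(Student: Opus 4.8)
The plan is to prove NP-hardness by a polynomial-time reduction from \textbf{Subset Sum}, which is NP-complete: given positive integers $a_1,\dots,a_n$ and a target $T$, decide whether some subset of the $a_j$ sums to exactly $T$. The guiding intuition is that \emph{all} the difficulty of $\maxpe$ comes from the budget cap inside the $\min$: without it we would have $\pe(W)=v_1(W)-C(W)$, a trivially separable maximization. I therefore want to engineer an instance in which, viewed as a function of the ``mass'' $s=\sum_{j\in W}a_j$ of the chosen bundle, the excess payment is a \emph{tent} function peaked sharply at $s=T$, so that meeting a suitable threshold is equivalent to selecting a bundle of mass exactly $T$.

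Concretely, given a Subset Sum instance I would build a single-agent additive instance with $m=n$ projects by setting $v_{1j}=2a_j$ and $C_j=a_j$ for every $j$, budget $b_1=2T$, and decision threshold $t=T$. This transformation is clearly polynomial-time, and the standing feasibility assumption $v_{1j}\ge C_j$ is respected since $2a_j\ge a_j$.

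The heart of the argument is a one-line calculation. For any $W$, writing $s=\sum_{j\in W}a_j$ gives $v_1(W)=2s$ and $C(W)=s$, so
\[
\pe(W)=\min(b_1,v_1(W))-C(W)=\min(2T,\,2s)-s=\min(s,\,2T-s).
\]
This quantity is increasing for $s\le T$ and decreasing for $s>T$, with a unique maximum of value $T$ attained exactly at $s=T$. Hence $\pe(W)\ge T$ holds if and only if $s=T$, i.e.\ if and only if $W$ corresponds to a subset of the original items summing to exactly $T$. Consequently the constructed instance satisfies $\maxpe\ge t$ precisely when the Subset Sum instance is a yes-instance, which establishes the lemma.

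The step I expect to require the most care is ruling out the ``overshoot'' branch: one must confirm that choosing $s>T$ never helps, since there the agent's payment is frozen at the cap $b_1=2T$ while the cost $C(W)=s$ keeps growing, yielding $\pe=2T-s<T$. The single design choice that makes everything work is pinning the budget to exactly twice the target (equivalently, fixing the value-to-cost ratio so the tent peaks at $T$); this is what collapses the threshold test to an \emph{exact} subset sum rather than a subset sum lying in a whole interval, the latter being an easy problem. I would close by noting that the same instance has a single additive agent, so the hardness already holds in that most restricted case, as claimed.
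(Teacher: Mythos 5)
Your proof is correct and follows essentially the same route as the paper's: the paper reduces from $\operatorname{PARTITION}$ (Subset Sum with target equal to half the total) using $v_{1i}=a_i$, $C_i=a_i/2$, $b_1=\gamma$, $t=\gamma/2$, which yields exactly the same tent function $\pe(W)=\min(s/2,\gamma-s/2)$ peaked at the target, just at half your scale. The key design choice is identical in both arguments --- value-to-cost ratio $2$ with the budget pinned at the value of the target bundle, so that the threshold test forces an exact subset sum.
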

\begin{proof}
We give a reduction from the NP-complete $\operatorname{PARTITION}$ problem: given integers $a_1,\cdots, a_m$, the problem asks if there exists a subset $S$ such that $\sum_{i\in S} a_i = \frac{1}{2} \sum_{i\in M} a_i $? For ease of notation, we denote $\gamma=\frac{1}{2} \sum_{i\in M} a_i $. Given any instance of the partition problem, we construct an instance of $\maxpe\geq t$ with a single agent such that $v_{1i}= a_i$ for each $i\in M$, $b_1 =\gamma$, $C_i=\frac{a_i}{2} $, and $t=\frac{\gamma}{2}$.

The $\maxpe\geq t$ decision problem for this instance is then
\begin{align*}
\max_{W\subseteq M} \ \  \min\left(\gamma, \sum_{i\in W} a_i \right)  - \sum_{i\in W} \frac{a_i}{2}  \geq t
\end{align*}
If there exists a set $S^*$ such that $\sum_{i\in S^*} a_i = \gamma$, then the corresponding $\maxpe\geq t$ instance is a $\mathtt{Yes}$ instance. This holds since 
\begin{align*}
\min\left(\gamma, \sum_{i\in S^*} a_i \right)  - \sum_{i\in S*} \frac{a_i}{2}  \ = \  \gamma - \frac{\gamma}{2}  =\frac{\gamma}{2}  =t
\end{align*}
On the other hand, if for any $S\subseteq\{1,\cdots, m\}$ we have $\sum_{i\in S} a_i\not = \gamma$, then the corresponding instance of $\maxpe\geq t$ is a $\mathtt{No}$ instance. This can be seen by the following observation:
\begin{itemize}
\item If $\sum_{i\in S} a_i<\gamma$, then $\min(\gamma,\sum_{i\in S} a_i)  - \sum_{i\in S} \frac{a_i}{2}= \frac{1}{2} \sum_{i\in S} a_i  <  \frac{\gamma}{2}  =t$
\item If $\sum_{i\in S} a_i>\gamma$, then $\min(\gamma,\sum_{i\in S} a_i) - \sum_{i\in S} \frac{a_i}{2}= \gamma-\frac{1}{2} \sum_{i\in S} a_i < \frac{\gamma}{2}  = t $ 
\end{itemize}
Thus, we see that a solution to the  $\operatorname{PARTITION}$ problem exists if and only if the corresponding $\maxpe\geq t$ instance has a valid solution.
\end{proof}

\begin{remark}
  The inapproximability given by \Cref{thm:inapprox} holds for a broader class of social welfare functions that includes (but is not limited to) Nash social welfare and egalitarian social welfare, in addition to utilitarian.
\end{remark}

Even though UWO admits an FPTAS, with the inclusion of the simple and intuitive constraint of WP, we obtain the above strong inapproximability results. We next show that the inapproximability holds even if all projects costs are identical.

\subsection{Identical Costs Setting}
\label{subsec:inapprox_identical_costs}

We consider the special case in which all projects have identical costs.

In this setting we have $C_j = C$, $\forall j\in M$, for some constant $C$. We assume $C=1$ for simplicity. We first note that the UWO allocation in the identical costs setting is computable in polynomial time.\footnote{Specifically, we can sort the projects according to their social welfare and greedily select them until there is no remaining project with non-negative social welfare or until no budget remains.}

Given UWO is polynomial time computable in the identical costs setting, and the same problem is NP-hard in the general setting, we may take this as an encouraging sign that a tractable algorithm for UWO-WP may exist in our restricted setting. However, as we will show, UWO-WP does not admit any polynomial approximation guarantees, even in this restricted setting. Our argument follows similarly to that used in the general setting. Intuitively, to be able to afford any single project in our setting, we must be able to identify a subset of projects that extracts a single dollar of excess payment when such a subset exists. We define the decision version of the MaxPE problem with $t=1$ ($\maxpe\geq1$) as $\max_{W\subseteq M}  \sum_{i\in N} \min(b_i,v_i(W) )-\lvert W \rvert \geq 1$.

\begin{restatable}{lemma}{lemmaxpeone} 
\label{lem:maxpe1}
    It is NP-hard to decide $\maxpe\geq1$, even with identical budgets.
\end{restatable}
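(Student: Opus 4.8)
The plan is to reduce from \textsc{Exact Cover by 3-Sets} (X3C), which is NP-complete: given a universe $U$ with $|U| = 3q$ and a family $\mathcal{F}$ of $3$-element subsets of $U$, decide whether some subfamily partitions $U$. I would build an identical-costs instance with one agent per element and one unit-cost project per set, tuned so that the excess-payment objective peaks at exactly $1$ precisely at the exact covers.

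Concretely, set $N = U$ (so $n = 3q$) and $M = \mathcal{F}$ with $C_T = 1$ for every $T \in \mathcal{F}$, give every agent the identical budget $b := \tfrac{1}{3} + \tfrac{1}{3q} = \tfrac{q+1}{3q}$, and let $v_{e,T} = b$ if $e \in T$ and $v_{e,T} = 0$ otherwise (additive valuations). This is a valid instance: budgets are identical, costs are identical, and each project satisfies $\sum_{e \in U} v_{e,T} = 3b = 1 + \tfrac1q \ge 1 = C_T$. The key observation is that, because a single set containing $e$ already contributes exactly $b$ to agent $e$, we have $\min(b, v_e(W)) = b$ when $e$ is covered by some set in $W$ and $0$ otherwise. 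Writing $\mathrm{cov}(W)$ for the set of covered elements and using $C(W) = |W|$, this yields the clean identity
\[
\pe(W) = \sum_{e \in U} \min\!\big(b, v_e(W)\big) - |W| = b\,|\mathrm{cov}(W)| - |W|.
\]

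The heart of the argument is then a short two-case bound showing $\pe(W) \le 1$ for every $W$, with equality iff $W$ is an exact cover. Since $|\mathrm{cov}(W)| \le \min(3|W|, 3q)$: when $|W| \le q$ we get $\pe(W) \le (3b - 1)|W| = \tfrac{|W|}{q} \le 1$, where equality forces $|W| = q$ and $|\mathrm{cov}(W)| = 3q$, i.e.\ $q$ disjoint sets covering $U$; when $|W| > q$ we get $\pe(W) \le 3bq - |W| = (q+1) - |W| \le 0$. Conversely any exact cover $W^\ast$ has $|W^\ast| = q$ and $|\mathrm{cov}(W^\ast)| = 3q$, so $\pe(W^\ast) = 3bq - q = 1$. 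Therefore $\maxpe \ge 1$ if and only if the X3C instance is a \texttt{Yes} instance, establishing NP-hardness even with identical budgets. Note that no approximation gap is needed here, only the exact equivalence, so computing everything with these rational parameters suffices.

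The step I expect to be the main obstacle—and the reason a covering problem rather than \textsc{Partition} is the natural source—is engineering the objective so its maximum lands exactly on the threshold $1$. With unit costs, the value-proportional-cost device used in \Cref{lem:maxpe} is unavailable, so the hardness must instead come from the tension between \emph{covering} every agent's budget and paying a flat cost per selected project. The delicate part is choosing $b$ so that under-covering (the $|W| < q$ case, which gives only $\tfrac{|W|}{q} < 1$) and over-covering (the $|W| > q$ case, where paying for redundant projects drives $\pe$ down to at most $0$) both fall strictly below $1$, leaving the exact covers as the unique way to reach it. Verifying that no ``shortcut'' selection—e.g.\ a handful of disjoint or partially overlapping sets—can attain $1$ is exactly what the two-case inequality above rules out.
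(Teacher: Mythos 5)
Your proposal is correct and takes essentially the same approach as the paper: the paper's proof also reduces from X3C with one agent per element, one unit-cost project per triple, identical budgets $\tfrac{1}{3}+\tfrac{1}{n}$, and valuations equal to the budget on containing triples. Your two-case bound ($|W|\le q$ versus $|W|>q$) is just a slightly cleaner packaging of the paper's counting argument that only exact covers attain $\pe(W)=1$.
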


With the preceding lemma, the following inapproximability result follows from an analogous argument to that used in the proof of \Cref{thm:inapprox}. 

\begin{theorem}
    Let $f(n,m)$ be a polynomial time computable function. Unless $P\neq NP$, there is no polynomial time $f(n,m)$-approximation algorithm for UWO-WP problem in the identical costs setting.
\end{theorem}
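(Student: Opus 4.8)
The plan is to mirror the proof of \Cref{thm:inapprox} almost verbatim, combining the gap-introducing reduction of \Cref{thm:gap_reduction} with the hardness of \Cref{lem:maxpe1}. Suppose, toward a contradiction, that there is a polynomial time $f(n,m)$-approximation algorithm $\operatorname{Alg}$ for UWO-WP restricted to instances with identical costs. I would feed $\operatorname{Alg}$ the transformed instance $I^t$ built in \Cref{thm:gap_reduction} and argue that the resulting decision procedure solves $\maxpe \geq 1$, contradicting \Cref{lem:maxpe1} under $P \neq NP$.

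The one point requiring care — and the reason the identical-costs hardness is phrased through the $\maxpe \geq 1$ problem rather than general $\maxpe \geq t$ — is that the reduction of \Cref{thm:gap_reduction} introduces a new project $0$ whose cost equals $t$. For the transformed instance to remain within the identical-costs setting, this added cost must match the common cost $C=1$ of every original project, which forces $t=1$. Concretely, I would start from an arbitrary instance $I$ of $\maxpe \geq 1$ with identical unit costs (as produced by \Cref{lem:maxpe1}) and apply the reduction with $t=1$: add agent $0$ with $b_0 = 0$ and $v_{00} = 2f(n+1,m+1)\sum_{i\in N} v_i(M)$ (all other valuations involving the new agent or project set to zero), together with project $0$ of cost $C_0 = 1$. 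Since every project in $I$ has unit cost and $C_0 = 1$, the instance $I^{1}$ indeed has identical costs, so $\operatorname{Alg}$ legitimately applies to it.

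From here the gap argument is identical to the one in \Cref{thm:gap_reduction}. The $f$-approximation guarantee gives $f(n+1,m+1)\,\SW(\operatorname{Alg}(I^{1})) \ge \operatorname{Opt}(I^{1})$, and exactly as before one shows $\maxpe(I) \ge 1$ if and only if $\SW(\operatorname{Alg}(I^{1})) \ge 2\sum_{i\in N} v_i(M)$: if a single dollar can be extracted in a WP manner then project $0$ becomes affordable and $\operatorname{Opt}(I^{1}) \ge 2f(n+1,m+1)\sum_{i\in N} v_i(M)$, whereas if $\operatorname{Alg}$ fails to select project $0$ its welfare is capped by $\sum_{i\in N} v_i(M)$. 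Thresholding the output of $\operatorname{Alg}$ therefore decides $\maxpe \geq 1$ in polynomial time. The main (and essentially only) obstacle is the structural bookkeeping just described — verifying that fixing $t=1$ keeps all costs uniform while leaving the welfare gap fully intact; once that is confirmed, \Cref{lem:maxpe1} supplies the contradiction and completes the proof.
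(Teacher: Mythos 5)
Your proposal is correct and follows essentially the same route as the paper: the paper's proof is exactly the "analogous argument" you describe, invoking the gap-introducing reduction of \Cref{thm:gap_reduction} instantiated with $t=1$ (so that the added project $0$ has cost $C_0 = 1$, keeping all costs identical) together with the NP-hardness of $\maxpe \geq 1$ from \Cref{lem:maxpe1}. Your observation that forcing $t=1$ is precisely why the identical-costs section works with the $\maxpe\geq 1$ decision problem matches the paper's reasoning.
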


Having found a strong inapproximability in various settings with additive valuations, we turn to another well-motivated setting and give an approximation algorithm for UWO-WP.

\section{An Algorithm for Laminar Single-minded Valuations}
\label{sec:laminar_sm}
In this section, we study a setting in which each agent derives non-zero value from a project bundle if and only if the bundle contains the set of projects desired by that agent, which we refer to as the agent's \textit{demand set}. We refer to this as the setting with \textit{single-minded} valuations, following the precedent of single-minded bidders in the combinatorial auctions literature \citep{APT+04a,CDS04a}. Single-minded agents are also well-studied in the mechanism design literature \citep{DGS+20a,Aziz20b} and fair division literature \citep{BLM16a}. In contrast to the additive setting, the single-minded setting captures project complementarities in agent valuations. 
We now define the single-minded setting formally.
\begin{definition}[Single-minded Valuations]
    Valuations $\mathbf{v} = (v_i)_{i\in N}$ are \textit{single-minded} if, for each $i\in N$, there exists $D_i\subseteq M$ and $z_i\in \mathbb{R}_{\ge 0}$ such that $v_i(W) = z_i$ if $D_i\subseteq W$ and zero otherwise. We refer to $D_i$ as agent $i$'s \textit{demand set}.
\end{definition}

First note that for any set of agents with the same demand set, i.e. $D_i=T$ for all $i\in N^{\prime}\subseteq N$ and some $T\subseteq M$, the problem is equivalent to the problem where $N^{\prime}$ is replaced by a single agent $i^{\prime}$ with $b_{i^{\prime}} = \sum_{i\in N^{\prime}} b_i$ and $v_{i^{\prime}}(W) = \sum_{i\in N^{\prime}} v_i(W)$ for all $W\subseteq M$. After executing this step for all such sets of agents, we have an equivalent instance with distinct demand sets, $\mathbf{D}$. Henceforth, we assume without loss of generality that all demand sets are distinct. We will denote by $N_{S}$ the set of agents whose demand sets are contained in the set $S\subseteq M$, i.e. $N_S=\{i\in N\vert D_i\subseteq S\}$. 
We refer to any set which maximizes the $\pe$ quantity, as defined in \Cref{subsec:gap_intro}, as a \textit{$\maxpe$ set}.

In this section, we give an FPTAS for a particular restriction of the single-minded setting, that with \textit{laminar single-minded valuations}, which we will define now.

\begin{definition}[Laminar Single-minded Valuations]
    Valuations $\mathbf{v} = (v_i)_{i\in N}$ are \textit{laminar single-minded} if they are single-minded and the demand set family $\mathbf{D}=(D_i)_{i\in N}$ is a laminar set family, i.e. for every $D_i, D_j \in \mathbf{D}$, $D_i\cap D_j$ is either empty or equal to $D_i$ or $D_j$.
\end{definition}

The laminar single-minded case describes instances in which the project set is naturally partitioned into categories and sub-categories. Relating back to \Cref{ex:towns} where neighboring towns fund shared public projects, we might consider a scenario in which each town's budget is constrained to be used toward certain project types, such as infrastructure, housing, or green space. Further, the towns constrained to spend on green space might be divided by whether they want a park with athletic fields or walking trails. In this case, the towns' valuations would comprise a laminar set family.

UWO remains NP-hard in our restricted setting with laminar demand sets. 
This hardness results follows from a straightforward reduction from $\operatorname{KNAPSACK}$ to an instance of our problem with disjoint (and thus, laminar) demand sets which maps knapsack items to agent demand sets. 
If we set $b_i = z_i$ for all $i\in N$, we have that the UWO-WP outcome will also be UWO and thus the same reduction shows NP-hardness of UWO-WP in the single-minded laminar setting.

Given the hardness results, we search for an approximation of UWO-WP in the laminar single-minded setting.
The following is the main result of the section, which provides an FPTAS for UWO-WP in the single-minded setting with laminar demand sets.

\begin{theorem}\label{thm:uwowp_laminar}
    In the laminar single-minded setting, the UWO-WP problem admits an FPTAS.
\end{theorem}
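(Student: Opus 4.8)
The plan is to exploit the laminar structure to reduce UWO-WP to a knapsack-type problem on a forest and then invoke the FPTAS for the knapsack problem with conflict graphs.

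First I would set up the combinatorial structure. Since each agent is single-minded, the only projects worth funding are those lying in some demand set: funding a project that does not complete any demand set only increases $C(W)$ without raising any $v_i(W)$, so I may assume without loss of generality that the funded set $W$ is a union of demand sets. Because $\mathbf{D}$ is laminar, I represent the distinct demand sets as the nodes of a forest, where $D_v$ is a descendant of $D_u$ exactly when $D_v \subsetneq D_u$. Assigning each project to the minimal demand set containing it, I give each node $u$ a private cost $c_u$ (the cost of projects first appearing at $u$), a value $z_u$, and a budget $b_u$ (after the merging of coincident demand sets described in the preamble). The key structural observation is that funding $D_u$ forces funding every project in every descendant demand set, so the set $T$ of satisfied agents is \emph{closed under taking descendants}: $T$ is a union of complete subtrees, and then $C(W)=\sum_{u\in T}c_u$ while $v_i(W)=z_i[\,i\in T\,]$.

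With this reduction, by \Cref{lem: WP-WBB} the UWO-WP problem becomes the following: over all descendant-closed $T$ in the forest,
\[
\text{maximize } \sum_{u\in T}\bigl(z_u-c_u\bigr) \quad\text{subject to}\quad \sum_{u\in T}\bigl(\min(b_u,z_u)-c_u\bigr)\ \ge\ 0 .
\]
This is a knapsack problem with a forest-shaped precedence constraint: the objective is the net welfare, and the single linear inequality is exactly the feasibility-plus-WP condition of \Cref{lem: WP-WBB} (the left-hand side is $\pe(W)$). I would then encode the descendant-closure requirement as the independence constraint of a knapsack-with-conflict-graph instance whose conflict graph inherits the laminar/forest structure (hence is chordal), so that the known FPTAS for the knapsack problem with conflict graphs on such graphs applies. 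Concretely, the algorithm is a bottom-up dynamic program over the forest that, for each node and each (scaled) value level, records the best attainable value of the feasibility resource $\pe$; merging the children's tables is a max-plus convolution, and profit scaling controls the table size, yielding a running time polynomial in $n$, $m$, and $1/\epsilon$. Finally, given the near-optimal $W$ produced, the payments $x_i=\min(b_i,v_i(W))$ from the proof of \Cref{lem: WP-WBB} realize a feasible, WP outcome of the claimed welfare.

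The main obstacle is that the objective is a \emph{difference} of values and costs, and the feasibility constraint likewise couples budgets against costs, both of which can be large. A naive scaling of the values $z_u$ would incur additive error proportional to the total value, which need not be comparable to the optimal welfare. The crux is therefore to perform the scaling within the tree dynamic program so that the error is controlled relative to $\operatorname{Opt}$ (for instance, scaling against a lower bound such as the welfare of the best single feasible subtree, while correctly accounting for the negative-net-welfare nodes that the precedence constraint drags in); this is precisely what the forest structure buys us and what makes the conflict-graph FPTAS applicable here. A secondary point to verify is that the per-node reduction—merging coincident demand sets, assigning each project to its minimal demand set, and the descendant-closure characterization of $T$—is exact, so that no welfare is lost in passing to the forest formulation.
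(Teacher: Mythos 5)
Your forest formulation and the chordality observation are sound and mirror the scaffolding of the paper's proof (cf.\ \Cref{lem:chordal}), but the step where you pass to knapsack with conflict graphs hides a genuine gap. To turn descendant-closure into an independence constraint you must aggregate each subtree into a single node; the induced knapsack weight of the node rooted at $u$ is then $C(D_u)-\sum_{v:D_v\subseteq D_u}\min(b_v,z_v)=-\pe(D_u)$, and the capacity is $0$. These weights are \emph{negative} precisely for surplus-generating subtrees, and such subtrees cannot be discarded: they are the only means of paying for high-welfare deficit subtrees (this is the phenomenon of Table~\ref{tab:sacrifice}). The FPTAS of \citet{PfSc09a} for knapsack with chordal conflict graphs assumes nonnegative weights, so it does not apply to the instance you construct. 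Your fallback tree DP has the same problem in a different guise: keeping the feasibility resource $\pe$ exact and scaling profits requires a grid width $\delta\approx\epsilon\cdot\mathrm{Opt}/n$, while the profit levels the tables must store include the welfares of deficit subtrees whose completability to a feasible solution cannot be determined locally; these welfares can exceed $\mathrm{Opt}$ (and your proposed lower bound, the best single \emph{feasible} subtree) by a factor exponential in the input size. So the tables are not of polynomial size, truncating them forfeits the guarantee, and scaling the constraint side instead is not an option since WP and weak budget balance must hold exactly.

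What is missing is exactly the paper's $\maxpe$ preprocessing, which converts this two-sided (surplus versus deficit) problem into a genuine knapsack. The paper first shows that $\maxpe$ is polynomial-time solvable in the single-minded setting via an integral LP (\Cref{lem:mpe_sm_general}), and, crucially, that every $\maxpe$ set $Q$ is contained in some UWO-WP optimum (\Cref{lem:mpe_in_uwowp}). Committing to $Q$ makes the residual problem one-sided: the capacity becomes $\pe(Q)\geq 0$, and every residual node has nonnegative weight, because a subtree whose addition to $Q$ had negative net cost would increase $\pe$, contradicting maximality of $Q$. Only after this step are negative-profit nodes discardable, is the optimum comparable to the maximum single profit, and does standard profit scaling (and hence the chordal-conflict-graph FPTAS) go through. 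A secondary, fixable slip: the satisfied agents are not exactly your descendant-closed $T$, since an ancestor whose demand set is covered by the union of selected children's demand sets is also satisfied; this only increases welfare and surplus, so it is harmless, but it should be stated.
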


Before we prove this result, we state three key lemmas used in the proof. We first point out that $\maxpe$, despite being NP-hard for a single agent in the additive setting, is polynomial-time solvable in the single-minded setting. 

\begin{restatable}{lemma}{lemmaxpesm}
\label{lem:mpe_sm_general}
    The $\maxpe$ problem can be solved in polynomial time in the single-minded setting.
\end{restatable}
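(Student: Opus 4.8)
The plan is to observe that, in the single-minded setting, the $\pe$ objective collapses to a reward-minus-cost selection problem that is exactly an instance of the maximum-weight closure (equivalently, project selection) problem, which is solvable in polynomial time via a single minimum $s$--$t$ cut computation.

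First I would simplify $\pe(W)$. Since $v_i(W)=z_i$ whenever $D_i\subseteq W$ and $v_i(W)=0$ otherwise, and since $b_i\geq 0$, we have $\min(b_i,v_i(W))=w_i\cdot\indicator{D_i\subseteq W}$ where $w_i:=\min(b_i,z_i)$. Hence
\begin{equation*}
\pe(W)=\sum_{i\,:\,D_i\subseteq W} w_i-\sum_{j\in W} C_j .
\end{equation*}
Thus $\maxpe$ asks for a set of projects $W$ maximizing the total reward $w_i$ collected from every agent whose entire demand set is purchased, minus the total cost of purchased projects. This is precisely a project-selection objective.

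Next I would encode this as a maximum-weight closure instance and solve it by min-cut. I build a network with source $s$, sink $t$, one node per agent $i$ with an edge $s\to i$ of capacity $w_i$, one node per project $j$ with an edge $j\to t$ of capacity $C_j$, and an infinite-capacity edge $i\to j$ for every $j\in D_i$. For any finite $s$--$t$ cut, the infinite edges force that whenever agent $i$ lies on the source side, all projects in $D_i$ lie on the source side too; conversely, in a minimum cut an agent is placed on the source side exactly when its entire demand set already is, since this only saves the $w_i$ term and incurs no extra cost (the reverse edges $i\to j$ from a sink-side $i$ to a source-side $j$ do not cross the cut in the counted direction). Letting $W$ denote the set of projects on the source side, the source-side agents are then exactly $\{i:D_i\subseteq W\}$, and the cut value equals $\sum_i w_i-\pe(W)$. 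Therefore a minimum cut yields a $\maxpe$ set $W$, with $\maxpe=\sum_i w_i-\mathrm{mincut}$.

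Finally I would bound the running time: the network has $O(n+m)$ nodes and $O(\sum_i|D_i|)=O(nm)$ edges, so any polynomial-time max-flow algorithm computes the minimum cut, and hence $\maxpe$, in polynomial time. I expect the main conceptual step to be recognizing the reduction: seeing that the $\min(b_i,v_i(W))$ term linearizes to an indicator-weighted reward and that the resulting reward-minus-cost objective is a closure problem. Once that is in place, the only point needing care is verifying that a minimum cut never leaves out an agent whose demand set is already purchased; the complexity bound is routine. Note that this argument uses no laminar structure, so it establishes the lemma for the full single-minded setting.
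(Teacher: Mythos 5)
Your proof is correct, and it takes a genuinely different route from the paper's. Both arguments start from the same linearization---$\min(b_i,v_i(W))=\min(b_i,z_i)$ exactly when $D_i\subseteq W$ and $0$ otherwise, so $\pe(W)=\sum_{i: D_i\subseteq W}\min(b_i,z_i)-C(W)$---but you then recognize this as a maximum-weight closure (project selection) problem and solve it with a single minimum $s$--$t$ cut, whereas the paper writes it as a $0$--$1$ program with precedence constraints $x_i\leq y_j$ for $j\in D_i$ and invokes the result of \citet{BCG+15a} that the LP relaxation has a totally unimodular constraint matrix. These are two classical faces of the same polyhedral fact (closure-type precedence programs are integral, and this integrality is witnessed by max-flow/min-cut duality), so the mathematical content coincides, but the presentations differ in what they deliver. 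Your cut accounting is sound: the infinite agent-to-project edges force closedness of the source side, the reverse-edge observation correctly handles the one subtle direction, and the identity (cut value) $=\sum_i w_i-\pe(W)$, where $W$ is the source-side project set, gives $\maxpe=\sum_i w_i-\mathrm{mincut}$; since $w_i\geq 0$, placing every agent with $D_i\subseteq W$ on the source side is weakly optimal, so ties are harmless. What your route buys is a self-contained, explicit, strongly polynomial combinatorial algorithm with no external citation; what the paper's route buys is brevity, since total unimodularity is taken off the shelf. You are also right that neither argument uses laminarity, which matches the lemma's claim for the full single-minded setting. One implementation footnote: replace the ``infinite'' capacities by any value exceeding $\sum_i w_i$ to keep the network finite; this changes nothing in the argument.
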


The following lemma, which states that every MaxPE set is contained in at least one UWO-WP outcome, also holds for the single-minded setting in general.

\begin{lemma}\label{lem:mpe_in_uwowp}
    In the single-minded setting, for any MaxPE set $Q$, there exists a UWO-WP outcome $W^{\star}$ such that $W^{\star}\supseteq Q$.
\end{lemma}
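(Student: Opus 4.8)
The plan is to start with an arbitrary UWO-WP outcome $W^\star$ and argue that, without decreasing social welfare and without violating WP, we can always enlarge it (or replace it) so that it contains a given MaxPE set $Q$. Let me think about what properties I need to exploit here.

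Let me set up the framework. Fix a MaxPE set $Q$, so $\pe(Q) = \sum_{i\in N}\min(b_i, v_i(Q)) - C(Q)$ is maximal over all subsets of $M$. Take any UWO-WP outcome with project set $W^\star$. By Lemma~\ref{lem: WP-WBB}, the fact that $W^\star$ is WP-fundable is equivalent to $C(W^\star)\le \sum_{i\in N}\min(b_i, v_i(W^\star))$, i.e. $\pe(W^\star)\ge 0$. The goal is to produce a WP-fundable set $W^\star\cup Q$ (or a superset) whose social welfare is at least $\SW(W^\star)$.

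The key observation I would use is the following decomposition of $\pe$ in the single-minded setting. For any sets $A, B\subseteq M$, because each agent $i$ contributes $\min(b_i, z_i)$ to the payment pool exactly when $D_i\subseteq A$ (and contributes nothing otherwise), the quantity $\sum_{i}\min(b_i, v_i(A))$ is "monotone and modular across demand sets": adding projects to $A$ only activates more agents. Concretely, I would first argue that $\pe$ is \emph{supermodular-like} enough that $\pe(W^\star\cup Q) \ge \pe(W^\star) + \pe(Q) - \pe(W^\star\cap Q)$, or more directly that $\pe(W^\star\cup Q)\ge \pe(Q)\ge \pe(W^\star\cap Q)\ge 0$ fails in general — so the cleaner route is an exchange argument. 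The main obstacle is handling the cost term $C(\cdot)$: enlarging $W^\star$ to $W^\star\cup Q$ adds cost $C(Q\setminus W^\star)$, which must be covered by the \emph{additional} payments that the newly-satisfied agents (those with $D_i\subseteq W^\star\cup Q$ but $D_i\not\subseteq W^\star$) are willing to make. I expect this is precisely where maximality of $Q$ does the work.

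Here is the exchange argument I would carry out in detail. Consider adding the projects $Q\setminus W^\star$ to $W^\star$. The extra cost incurred is $C(Q\setminus W^\star)$, while the extra WP-payment capacity unlocked is $\sum_{i: D_i\subseteq W^\star\cup Q,\ D_i\not\subseteq W^\star}\min(b_i, z_i)$. I claim $W^\star\cup Q$ remains WP-fundable, i.e. $\pe(W^\star\cup Q)\ge 0$; I would establish this by showing $\pe(W^\star\cup Q)\ge \pe(Q)\ge 0$, using that $Q$ is a MaxPE set together with a monotonicity argument: since every agent active under $Q$ is also active under $W^\star\cup Q$ and every project cost paid for $Q$ is also present, the marginal agents and marginal costs introduced by the union can only help (any marginal project whose cost is not covered by marginal agents would, if removed, contradict $Q$ being a MaxPE set — I would formalize this by comparing $\pe(W^\star\cup Q)$ with $\pe((W^\star\cup Q)\setminus\{\text{unprofitable projects}\})$ and invoking laminarity/single-mindedness so that removing a project cleanly deactivates a well-defined block of agents). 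Finally, for welfare: since $W^\star\cup Q\supseteq W^\star$ and valuations are monotone, $\sum_i v_i(W^\star\cup Q)\ge \sum_i v_i(W^\star)$, and I must check that the added value weakly exceeds the added cost $C(Q\setminus W^\star)$ so that $\SW$ does not drop; this again reduces to the MaxPE-optimality of $Q$ controlling the cost of the newly-added projects. Setting $W^{\star\star} = W^\star\cup Q$, it is WP-fundable (by Lemma~\ref{lem: WP-WBB}) with $\SW(W^{\star\star})\ge \SW(W^\star)$, and since $W^\star$ was UWO-WP, equality holds and $W^{\star\star}$ is the desired UWO-WP outcome containing $Q$. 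The single-mindedness (and ultimately laminarity) of the valuations is what makes the "marginal agents cover marginal costs" bookkeeping clean, and I expect the precise accounting of which agents become newly active to be the delicate step.
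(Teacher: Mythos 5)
Your overall strategy --- fix a UWO-WP outcome $W^\star$, add $Q' = Q\setminus W^\star$, and show the union remains WP-fundable with no loss of welfare --- is exactly the paper's strategy, but the step where you invoke MaxPE-optimality of $Q$ contains a genuine error. You propose to establish WP-fundability via $\pe(W^\star\cup Q)\ge \pe(Q)\ge 0$, justified by ``the marginal agents and marginal costs introduced by the union can only help.'' That inequality goes the wrong way: since $Q$ is a MaxPE set, $\pe(W^\star\cup Q)\le \pe(Q)$ by definition of maximality, and the inequality can be strict. Concretely, take projects $q,w$ with $C_q=1$, $C_w=10$, and two single-minded agents with $D_1=\{q\}$, $z_1=6$, $b_1=6$ and $D_2=\{w\}$, $z_2=100$, $b_2=7$. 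Then $Q=\{q\}$ is the unique MaxPE set with $\pe(Q)=5$, while the UWO-WP outcome is $W^\star=\{q,w\}$ and $\pe(W^\star\cup Q)=6+7-11=2<5$. (The demand sets here are disjoint, hence laminar, so laminarity --- which in any case is not available to you, since the lemma is stated for general single-minded valuations --- does not rescue the claim.) Marginal costs can certainly hurt $\pe$; what is true, and all you need, is $\pe(W^\star\cup Q)\ge 0$. Your fallback argument (``any marginal project whose cost is not covered by marginal agents would, if removed, contradict $Q$ being a MaxPE set'') also cannot work: maximality of $Q$ only provides \emph{upper} bounds on the $\pe$ value of other sets, so no removal of projects from the superset $W^\star\cup Q$ can ever contradict it.

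The idea you are missing is \emph{which set to compare $Q$ against}: the paper applies maximality of $Q$ to its own subset $Q\cap W^\star=Q\setminus Q'$. After cancelling the agents and costs common to both sides, the inequality $\pe(Q)\ge\pe(Q\cap W^\star)$ says precisely that the agents in $N_Q$ whose demand sets meet $Q'$ can pay, in the capped sense, for the marginal cost:
\begin{equation*}
\sum_{\{i\in N_Q \,\mid\, D_i\cap Q'\neq\emptyset\}}\min(b_i,z_i)\;-\;C(Q')\;\ge\;0.
\end{equation*}
This single inequality closes both of your remaining obligations. Splitting the agents in $N_{W^\star\cup Q'}$ into those whose demand sets avoid $Q'$ (these lie in $N_{W^\star}$, and WP-fundability of $W^\star$ covers $C(W^\star)$) and those whose demand sets meet $Q'$ (covered by the display) gives $\pe(W^\star\cup Q)\ge\pe(W^\star)\ge 0$, so \Cref{lem: WP-WBB} applies. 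And since $z_i\ge\min(b_i,z_i)$, the same split shows the welfare added by $Q'$ is at least the displayed quantity, hence nonnegative, so $\SW(W^\star\cup Q)\ge\SW(W^\star)$ and the union is again UWO-WP. Your bookkeeping plan is sound, but without the comparison against $Q\cap W^\star$ the proof does not go through.
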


\begin{proof}
  Let $Q$ be any MaxPE set and let $W$ be any UWO-WP outcome. If $Q\subseteq W$, we are done. Assume $Q\not \subseteq W$. Then, there is a non-empty set $Q^{\prime}=Q\setminus W$. We will now show that $W^{\prime} = W \cup Q^{\prime}$ is UWO-WP, which implies that $Q$ is a subset of a UWO-WP outcome and concludes our proof.

  First note that because $Q$ is a MaxPE set, we have that $\pe(Q) \geq \pe(Q\setminus Q^{\prime})$, which implies that
  \begin{align}
      & \sum_{i\in N_Q} \min(b_i, z_i) - C(Q) \geq \sum_{\crampedclap{i\in N_{Q\setminus Q'}}} \min(b_i, z_i) - \sum_{\crampedclap{j\in Q\setminus Q^{\prime}}} C_j \nonumber \\
      \implies & \sum_{i\in N_Q} \min(b_i, z_i) - \sum_{\crampedclap{\{i\in N_Q | D_i\cap Q^{\prime}=\emptyset\}}} \min(b_i, z_i) - C(Q^{\prime}) \geq 0 \nonumber \\
      \implies & \sum_{\crampedclap{\{i\in N_Q | D_i\cap Q^{\prime}\neq \emptyset\}}} \min(b_i, z_i) - C(Q^{\prime}) \geq 0 
      \label{eq:qprime_wp}
  \end{align}
    
  We now show that $W^{\prime}$ is feasible and WP. Because $W$ and $Q^{\prime}$ are disjoint and $W\supseteq Q\setminus Q^{\prime}$, observe that
 
  \begin{align*}
      & \sum_{\crampedclap{i\in N_{W'}}} \min(b_i, z_i) - C(W^{\prime}) \\
      = & \sum_{\crampedclap{\{i\in N_{W'}| D_i\cap Q^{\prime}=\emptyset\}}} \min(b_i, z_i) - C(W) \quad + \quad \sum_{\crampedclap{\{i\in N_{W'} | D_i\cap Q^{\prime}\neq \emptyset\}}} \min(b_i, z_i) - C(Q^{\prime}) \\
      \geq & \sum_{i\in N_W} \min(b_i, z_i) - C(W) + \sum_{\crampedclap{\{i\in N_Q | D_i\cap Q^{\prime}\neq \emptyset\}}} \min(b_i, z_i) - C(Q^{\prime}) \geq 0
  \end{align*}
  
  The final inequality follows from \Cref{eq:qprime_wp} in conjunction with $W$ being WP. All that remains is to show that the addition of $Q^{\prime}$ to $W$ weakly improves social welfare. By a similar argument as above, it follows that $\SW(W^{\prime})$ equals
  $$
  \left[ \sum_{i\in N_W} z_i - C(W) \right] + \left[\hspace{2.5em} \sum_{\crampedclap{\{i\in N_Q | D_i\cap Q^{\prime}\neq \emptyset\}}} z_i - C(Q^{\prime}) \right]
  \geq \SW (W) + \sum_{\crampedclap{\{i\in N_Q | D_i\cap Q^{\prime}\neq \emptyset\}}} \min(b_i, z_i) - C(Q^{\prime}) \geq \SW (W) .
  $$
\end{proof}

We construct an FPTAS in the laminar single-minded setting, by providing an approximation-preserving reduction \citep{Vazi01a} to another generalization of the classical knapsack problem called \textit{knapsack on conflict graphs} (KCG) \citep{PfSc09a}, which we now describe. 

An instance of KCG is defined by a knapsack capacity $B$ and an undirected graph $G=(V,E)$ referred to as a \textit{conflict graph}, where each node $i\in V$ has an associated profit $p_i$ and weight $w_i$. The KCG problem is then expressed by the following optimization problem:

\begin{equation}
\begin{aligned}
    \max_{S\subseteq V} \quad & \displaystyle\sum_{i\in S} p_i \nonumber\\
    \textrm{subject to} \quad & \sum_{i\in S} w_i \leq B \nonumber \\
    & \lvert \{(i,j) \in E | i,j\in S\} \rvert = 0
\end{aligned}
\end{equation}

The following lemma states the conflict graph defined by a laminar set family is \textit{chordal}, i.e. all cycles of four or more vertices have a chord, which is an edge that is not part of the cycle but connects two vertices of the cycle.

\begin{lemma}
\label{lem:chordal}
    Given a laminar set family $\mathbf{D}=\{D_1,\cdots,D_n\}$, the graph $G=(V,E)$ where $V=[n]$ and $E=\{(i,j) | D_i\subseteq D_j\}$ is \textit{chordal}.
\end{lemma}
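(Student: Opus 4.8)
The plan is to prove chordality by exhibiting a \emph{perfect elimination ordering} of $G$, which is equivalent to chordality by the classical Fulkerson--Gross characterization. Recall that an ordering $\sigma$ of the vertices is a perfect elimination ordering if, for every vertex $v$, the neighbors of $v$ that appear \emph{after} $v$ in $\sigma$ induce a clique. The whole argument then reduces to choosing the right ordering and reading off the clique property from laminarity.

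First I would record a reformulation of the edge relation. Since we have assumed the demand sets are distinct, and assuming (as we may) that each $D_i$ is nonempty, laminarity says that for any $i \neq j$ exactly one of the following holds: $D_i \cap D_j = \emptyset$, or one of $D_i, D_j$ strictly contains the other. Hence the defining condition $\{i,j\} \in E$ (that $D_i \subseteq D_j$ or $D_j \subseteq D_i$) is equivalent to $D_i \cap D_j \neq \emptyset$; in other words $G$ is precisely the intersection graph of the family. A second, purely combinatorial observation is that two distinct comparable sets cannot have equal cardinality, so any edge $\{i,j\}$ with $|D_i| \le |D_j|$ in fact forces $D_i \subsetneq D_j$.

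Next I would take $\sigma$ to order $V$ by nondecreasing $|D_i|$ (ties broken arbitrarily) and verify it is a perfect elimination ordering. Fix a vertex $v$ and let $w, w'$ be two neighbors of $v$ occurring later in $\sigma$, so that $|D_w|, |D_{w'}| \ge |D_v|$. By the observation above, the edges $vw$ and $vw'$ give $D_v \subsetneq D_w$ and $D_v \subsetneq D_{w'}$. Therefore $D_w \cap D_{w'} \supseteq D_v \neq \emptyset$, so by the intersection reformulation $w$ and $w'$ are adjacent. Thus the later neighbors of every vertex form a clique, $\sigma$ is a perfect elimination ordering, and $G$ is chordal.

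The one delicate point, which I expect to be the only genuine obstacle, is the nonemptiness assumption: a vertex whose demand set is empty is adjacent to every other vertex (the empty set is contained in all of them), and such a universal vertex would break the cardinality argument if placed first in $\sigma$. I would dispatch this by noting that distinctness of the demand sets allows at most one empty set, and that appending a universal vertex to the end of a perfect elimination ordering of the remaining graph preserves the property; equivalently, an agent with empty demand set is valued unconditionally and may simply be removed without affecting the combinatorial structure. With this caveat handled, the perfect elimination ordering above establishes the lemma. (An alternative route, should one prefer to avoid the elimination-ordering machinery, is a direct argument: take a putative chordless cycle of length at least four, pick its vertex of maximum-cardinality demand set, and show its two cycle-neighbors are forced to be comparable, producing a chord and a contradiction.)
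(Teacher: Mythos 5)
Your proof is correct, but it takes a genuinely different route from the paper's. The paper argues directly on cycles: given a cycle of length at least four, it picks an edge $(i,j)$ with $D_i \subseteq D_j$ and an incident cycle edge $(i,k)$, deduces $D_j \cap D_k \neq \emptyset$, and laminarity then forces the chord $(j,k)$ --- essentially the ``alternative route'' you sketch in your closing parenthesis. You instead exhibit an explicit perfect elimination ordering (vertices sorted by nondecreasing $|D_i|$) and invoke the Fulkerson--Gross characterization; your verification that later neighbors of $v$ pairwise intersect in $D_v \neq \emptyset$, hence are comparable by laminarity, is sound. Each approach has its merits: the paper's is self-contained and needs no characterization theorem, while yours is constructive in a way that is algorithmically useful --- the FPTAS for knapsack with conflict graphs on chordal graphs invoked in \Cref{thm:uwowp_laminar} operates on exactly this kind of elimination structure, so your ordering could be handed to it directly rather than recomputed. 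Your treatment of the empty demand set is also a genuine improvement rather than a pedantic caveat: the paper's own step asserting $D_i \cap D_k \neq \emptyset$ ``by the construction of the graph'' silently fails when $D_i = \emptyset$ (the edge $(i,k)$ then only witnesses $\emptyset \subseteq D_k$), whereas you observe that such a vertex is universal, that distinctness permits at most one, and that appending a universal vertex to the end of a perfect elimination ordering preserves the property --- correctly noting that it must go last, not first where the cardinality sort would naively place it.
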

\begin{proof}
    Consider such a graph $G=(V,E)$. Let $E^c\subseteq E$ be a cycle of length four or more. Pick any edge $(i,j)\in E^c$ and assume without loss of generality that $D_i\subseteq D_j$. Now pick any edge $(i,k)\in E^c$, which must exist because $E^c$ is a cycle. We have that $D_i\cap D_k\neq \emptyset$ by the construction of the graph and thus $D_k\cap D_j \neq \emptyset$. Since $\mathbf{D}$ is a laminar set family, this means that $D_j$ and $D_k$ form a subset relation, which implies $(j,k)\in E$. $(j,k)$ is a chord since it connects two vertices in $E^c$ and cannot be in $E^c$ since it would form a triangle, contradicting $E^c$ being a cycle of length four or more. 
\end{proof}

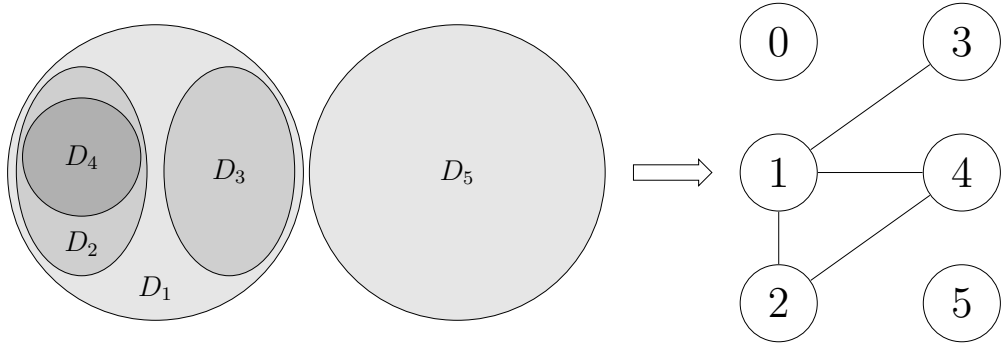
\begin{figure}
\begin{FitToWidth}[0.8\columnwidth]
\begin{tikzpicture}[roundnode/.style={draw,circle,fill,fill opacity=0.1},
          elipnode/.style={draw,ellipse,fill,fill opacity=0.1},
          baseline=(current bounding box.center)]
    \def\scale{1.3}
    \def\prop{0.1}
    
    \node[roundnode,
        inner sep=20*\prop cm,
        label={[above = 2*\prop cm,name=d1_lab, scale=\scale]south:$D_1$}] (d1) {};
    
    \node[elipnode,
        minimum height=40*\prop cm,
        minimum width=25*\prop cm,
        above left= 4*\prop cm and 0cm of d1_lab,
        label={[above=2*\prop cm, name=d2_lab, scale=\scale]south:$D_2$}] (d2) {};
    
    \node[elipnode,
        minimum height=40*\prop cm,
        minimum width=25*\prop cm,
        above right= 4*\prop cm and 0 cm of d1_lab,
        label={[name=d3_lab, scale=\scale]center:$D_3$}] {};
    
    \node[roundnode,
        inner sep = 8*\prop cm,
        above = \prop cm of d2_lab,
        fill opacity=0.15, 
          text opacity=1, 
        label={[scale=\scale]center:$D_4$}] {};
    
    \node[roundnode,
        inner sep=20*\prop cm,
        right= 0.1cm of d1,
        label={[scale=\scale]center:$D_5$}] (d5){};
\end{tikzpicture}

\quad
\begin{tikzpicture}[baseline=(current bounding box.center)]
  \node[single arrow, draw=black,
      minimum width = 10pt, single arrow head extend=3pt,
      minimum height=1.5cm] {}; 
  \end{tikzpicture}
\quad

\begin{tikzpicture}[roundnode/.style={draw,circle,scale=2},
          baseline=(current bounding box.center), every label/.style={text=black}]
  
  \node[roundnode] (3) {$3$};
  \node[roundnode,below=of 3] (4) {$4$};
  \node[roundnode,below=of 4] (5) {$5$};
  \node[roundnode, left= 2cm of 3] (0) {$0$};
  \node[roundnode, left=  2cm of 4] (1) {$1$};
  \node[roundnode, left= 2cm of 5] (2) {$2$};

  \draw (3) -- (1) -- (2) -- (4) -- (1);
\end{tikzpicture}
\end{FitToWidth}

\caption{The mapping from an instance of the laminar single-minded UWO-WP problem to an instance of KCG as described in the proof of \Cref{thm:uwowp_laminar}.}
\label{fig:kcg_reduction}
\end{figure}

\begin{proof}[Proof of \Cref{thm:uwowp_laminar}]
    Let $I = \langle N, M, \mathbf{v}, \mathbf{b}, \mathbf{C} \rangle$ be an instance of UWO-WP problem with  laminar single-minded demand sets. Let $Q$ be a MaxPE set computed by the polytime algorithm given in the proof of \Cref{lem:mpe_sm_general} and let $N^\prime$ be the set of agents whose demand sets are not in $Q$, i.e. $N^\prime = \{i\in N | D_i\nsubseteq Q\}$. 
    
    We know that a UWO-WP solution which contains $Q$ is guaranteed to exist by \Cref{lem:mpe_in_uwowp}. We proceed by reducing the problem of computing such a UWO-WP solution $W^\star \supseteq Q$ to an instance of KCG as follows:
    \begin{itemize}
        \item $V = \{0\}\cup N$
        \item $p_i = 
          \begin{cases}
            \hspace{2.5em} \displaystyle\sum_{\crampedclap{j\in N\setminus N^\prime}} z_j - C(Q) & \quad \text{if}\ i=0 \\
            \hspace{2.5em} \displaystyle\sum_{\crampedclap{j\in N^\prime | D_j\subseteq D_i\cup Q}} z_j - C(D_i\setminus Q) & \quad \text{otherwise}
          \end{cases} $
        \item $w_i = 
          \begin{cases}
            0 & \text{if}\ i=0 \\
            C(D_i\setminus Q) - \displaystyle\sum_{\crampedclap{\{j\in N^\prime | D_j\subseteq D_i\}}} \min(b_j, z_j) & \text{otherwise}
          \end{cases}$
        \item $B = \pe(Q)$
        \item $E = \{(i,j) \vert D_i\subseteq D_j, \forall i,j\in N\}$
    \end{itemize}

    See \Cref{fig:kcg_reduction} for an example of the mapping from an instance of UWO-WP to an instance of KCG. In words, each node in the conflict graph besides $\{0\}$ corresponds to an agent in the UWO-WP instance $I$. The knapsack capacity corresponds to the slack in our feasibility constraint produced by the selection of the MaxPE set. Lastly, node $\{0\}$ corresponds to the agents whose demand sets are in the MaxPE outcome, i.e. $N\setminus N^\prime$. Thus, since we seek a UWO-WP solution containing $Q$, this node provides profit equal to $\SW(Q)$ for zero cost and is not connected by an edge to any other node.

    Let $S$ be a feasible solution to the instance of KCG reduced from $I$ and denote $S^0=S\setminus \{0\}$. We posit that $W = Q\cup \bigcup_{i\in S^0} D_i$ satisfies feasibility and WP. Since $S$ is a feasible KCG solution, we know $\sum_{i\in S} w_i = \sum_{i\in S^0} w_i \leq B$, which we can write as
    
    $$\sum_{\crampedclap{i\in S^0}} \left[ C(D_i\setminus Q) - \sum_{\crampedclap{\{j\in N^\prime \vert D_j\subseteq D_i\}}} \min(b_j, z_j) \right]  \leq \sum_{\crampedclap{i\in N_Q}} \min(b_i, z_i) - C(Q).$$

    \indent Since $S$ is feasible, we know that no two members of $\{D_i\}_{i\in S}$ can form a subset relation and thus all members are disjoint by the laminarity of $\mathbf{D}$. From this, we can also conclude that $\{j\in N^\prime \vert D_j\subseteq D_{i_1}\} \cap \{j\in N^\prime \vert D_j\subseteq D_{i_2}\} = \emptyset$ for all $i_1, i_2\in S$. Thus, the above inequality becomes

    \begin{align*}
      &\sum_{j\in \bigcup\limits_{\crampedclap{i\in S^0}} D_i \setminus Q} C_j - \quad \sum_{\crampedclap{\{j\in N^\prime \vert D_j\subseteq \bigcup\limits_{\crampedclap{i\in S^0}} D_i \}}} \min(b_j,z_j) \leq \sum_{\crampedclap{i\in N_Q}} \min(b_i,z_i) - C(Q)  \\
      &\implies \sum_{\crampedclap{i\in N_Q}} \min(b_i,z_i) + \quad \sum_{\crampedclap{\{i\in N\setminus N_Q \vert D_i\subseteq \bigcup\limits_{\crampedclap{j\in S^0}} D_j\}}} \hspace{0.5em} \min(b_i,z_i) - \sum_{\crampedclap{j\in Q\cup \bigcup\limits_{\crampedclap{i\in S^0}} D_i}} C_j \geq 0 \\ 
      &\implies \sum_{\crampedclap{\{i\in N\vert D_i\subseteq Q\cup \bigcup\limits_{\crampedclap{i\in S^0}} D_i\}}} \min(b_i,z_i) - \sum_{\crampedclap{j\in Q\cup \bigcup\limits_{\crampedclap{i\in S^0}} D_i}} C_j \geq 0
    \end{align*}
    which means $W$ is feasible and WP.

    All that remains to complete our approximation-preserving reduction is to equate the objective values of corresponding solutions of our original and reduced instances. Since node $\{0\}$ has weight $0$ and non-negative profit, we can restrict our attention to KCG solutions containing $\{0\}$ without loss of generality. We now show that the objective value of any feasible KCG solution, $S$, is equal to that of the resulting project bundle, $W$:
    \begin{align}
        \sum_{i\in S} p_i & = \sum_{\crampedclap{i\in S^0}} \left[ \hspace{2.5em} \sum_{\crampedclap{\{j\in N^\prime \vert D_j\subseteq D_i\cup Q\}}} z_j - C(D_i\setminus Q) \right] + \sum_{\crampedclap{j\in N\setminus N^\prime}} z_j - C(Q) \nonumber \\
        & =\quad \sum_{\crampedclap{\{i\in N^\prime \vert D_i\subseteq Q\cup \bigcup\limits_{\crampedclap{j\in S^0}} D_j\}}} \quad z_i \quad + \quad \displaystyle\sum_{\crampedclap{i\in N_Q}} z_i \quad - \quad \sum_{\crampedclap{j\in \bigcup\limits_{\crampedclap{i\in S^0}} D_i\setminus Q}} C_j - C(Q) \nonumber \\
        & =\quad \sum_{\crampedclap{i\in N_W}} z_i - C(W) \quad = \quad \SW(W) \nonumber
    \end{align}
    
    It is apparent that the reduction runs in polynomial time. \citet{PfSc09a} provide an FPTAS for KCG on chordal graphs. By \Cref{lem:chordal}, 
    $G$ is a chordal graph. Thus, laminar single-minded UWO-WP admits an FPTAS. 
\end{proof}

In the single-minded setting, the problem of welfare maximization resembles the well-studied \textit{set union knapsack problem} (SUKP), which is hard to approximate in the general case \citep{GNY94a,Nehm95a,Arul14a}. 
Despite SUKP's resemblance to our problems of interest, approximation results from SUKP or its restricted cases, such as those described by \citet{GNY94a}, do not imply any of our results. However, an argument similar to that given to prove \Cref{thm:uwowp_laminar} reveals a new restriction under which SUKP admits an FPTAS, namely the restriction to laminar item sets. Since this problem is not the topic of this work, we leave the theorem and proof of this finding, along with a formal description of SUKP, to \Cref{app:sukp}.

\section{An Algorithm for Symmetric Valuations}
\label{sec:symm}
{We now focus our attention on the domain of symmetric valuations. We say that a valuation function $v_i$ is \textit{symmetric} if it depends only on the size of the input, i.e $v_i(S) = v_i(T)$ whenever $|S| = |T|$. Such valuations describe a setting where multiple units of the same project or item are available and agents' valuations are solely dependent on the quantity of items selected.} Note that symmetric valuations are not necessarily additive. For instance, the valuations can be $v_i(S)=\sqrt{|S|}$ or $v_i(S)=e^{|S|}-1$, which are submodular and supermodular valuations, respectively, and neither of which are additive.

Symmetric valuations are standard restrictions in the mechanism design literature. For example, \citet{MhKt15} use symmetric valuations to capture "a spectrum auction of equally sized channels which all offer very similar conditions." Other research has investigated cost sharing mechanisms for symmetric valuations \citep{birmpas2019cost}.

\begin{algorithm}[!t]
  \caption{Symmetric Valuation}\label{alg:symm}
  \begin{algorithmic}
  \State \textbf{Input:}  $\langle N, M, \mathbf{v}, \pmb{b}, \mathbf{C} \rangle$ where valuations are symmetric
  \State Order projects by increasing cost, i.e $C_1\leq C_2 \leq \cdots \leq  C_m $
  \State $\Omega = \emptyset$
  \For{$k=1$ to $m$ }
  \If{$\sum_{i\in N}\min\{v_i([k]), b_i\} \geq C([k])$}
      \State $\Omega \leftarrow \Omega \cup k$
  \EndIf
  \EndFor
  \State $k^*\leftarrow \arg\max \limits_{k\in \Omega} \SW ([k])$
  \For{$i=1$ to $n$}
    \If{$\sum_{j=1}^i \min\{v_j([k^*]), b_j\} < C(W)$}
        \State $x_i \gets \min\{v_i([k^*]), b_i\}$
    \Else{}
        \State $x_i \gets C([k^*]) - \sum_{j=1}^{i-1} \min\{v_j([k^*]), b_j\}$
    \EndIf
  \EndFor
  \State \textbf{return} $W=[k^*] $ and  $\textbf{x}=(x_1,\cdots, x_n)$ 
  \end{algorithmic}
\end{algorithm}

\begin{theorem}
For symmetric valuations, Algorithm \ref{alg:symm} finds UWO-WP in polynomial time.
\end{theorem}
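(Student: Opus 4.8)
The plan is to exploit the symmetry of the valuations to collapse the search over all of $2^M$ into a search over the $m$ nested prefixes $[1]\subseteq[2]\subseteq\cdots\subseteq[m]$ obtained by sorting projects by increasing cost, and then to check that the two loops of \Cref{alg:symm} correctly identify the welfare-optimal WP-feasible prefix and output valid payments. The guiding observation is that when $v_i$ depends only on cardinality, both the WP-feasibility condition of \Cref{lem: WP-WBB} and the social welfare $\SW(W)=\sum_{i\in N}v_i(W)-C(W)$ decompose into a valuation part that sees $W$ only through $|W|$ and a cost part $C(W)$ that we are free to minimize; for a fixed size $k$ the minimizer is exactly the prefix $[k]$ of the $k$ cheapest projects.

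First I would establish the key structural claim: some UWO-WP outcome has a prefix project set. Take any UWO-WP outcome $(W^\star,\mathbf{x}^\star)$ and set $k=|W^\star|$. Since $|[k]|=|W^\star|$ and valuations are symmetric, $v_i([k])=v_i(W^\star)$ for every $i$, so $\sum_{i\in N}\min(b_i,v_i([k]))=\sum_{i\in N}\min(b_i,v_i(W^\star))\ge C(W^\star)\ge C([k])$, where the first inequality is \Cref{lem: WP-WBB} applied to $W^\star$ and the second holds because $[k]$ comprises the $k$ cheapest projects. By \Cref{lem: WP-WBB}, $[k]$ is WP-feasible, and $\SW([k])=\sum_{i\in N}v_i(W^\star)-C([k])\ge \SW(W^\star)$; optimality of $W^\star$ forces equality, so $[k]$ is itself UWO-WP. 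The essential point is that replacing $W^\star$ by $[k]$ leaves every valuation term untouched while weakly decreasing $C(\cdot)$, which simultaneously relaxes the feasibility inequality and raises welfare.

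Next I would tie this to the algorithm. The first loop builds $\Omega=\{k\midd \sum_{i\in N}\min(v_i([k]),b_i)\ge C([k])\}$, which by \Cref{lem: WP-WBB} is precisely the set of sizes for which $[k]$ is WP-feasible. Every prefix indexed by $\Omega$ is WP-feasible and hence has welfare at most the UWO-WP optimum, while the structural claim produces one prefix in $\Omega$ attaining that optimum; therefore $k^*=\argmax_{k\in\Omega}\SW([k])$ satisfies $\SW([k^*])=\mathrm{OPT}$ and $W=[k^*]$ is a correct project set. It remains to verify the payment loop yields a feasible, WP, budget-balanced vector: it is the greedy realization of the payments $x_i=\min(v_i([k^*]),b_i)$ from the proof of \Cref{lem: WP-WBB}, charging agents in turn up to their cap until the running total first reaches $C([k^*])$ — possible exactly because $k^*\in\Omega$ — and charging the residual to the threshold agent (with zero thereafter), so that $\sum_i x_i=C([k^*])$ and $0\le x_i\le\min(v_i([k^*]),b_i)\le v_i([k^*])$. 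Finally, the initial sort costs $O(m\log m)$ and each loop runs in $O(mn)$ time, giving the claimed polynomial bound.

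The main obstacle is the structural claim that prefixes suffice; once we may restrict attention to the nested family $\{[k]\}$, the rest is bookkeeping. The subtlety requiring care is that $[k]$ must dominate an arbitrary size-$k$ set in \emph{both} senses at once — staying WP-feasible so that it is a legal outcome, and not losing welfare — and I would emphasize that symmetry is exactly what makes these two requirements point the same direction, since only the cost term changes under the exchange. I would also take care in the payment step to confirm non-negativity, namely that the threshold agent absorbs precisely the residual $C([k^*])-\sum_{j<i}\min(v_j([k^*]),b_j)\ge 0$ and that no agent is charged a negative amount.
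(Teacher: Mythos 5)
Your proposal is correct and follows essentially the same route as the paper's proof: sort by increasing cost, use symmetry to replace the optimal set $W^\star$ by the equal-cardinality prefix $[\,|W^\star|\,]$ (which preserves all valuation terms while weakly decreasing cost, so by \Cref{lem: WP-WBB} it lies in $\Omega$ and has weakly greater welfare), and conclude that the maximizing prefix $[k^*]$ is UWO-WP. Your additional explicit verification of the greedy payment loop and the runtime bound are details the paper handles implicitly via \Cref{lem: WP-WBB}, but they do not change the substance of the argument.
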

\begin{proof}
  As outlined in Algorithm \ref{alg:symm}, we assume the projects are ordered by increasing cost. We denote the set of projects $\{1,...,k\}$ by $[k]$ for simplicity of notation. Let $W^*$ be the solution to the UWO-WP problem with $\ell = |W^*|$, and $W$ be the set of projects returned by Algorithm~\ref{alg:symm}. Note that $v_i(W^*)=v_i([\ell])$ for each $i\in N$, since the valuations are symmetric. We see that $[\ell]$ can be funded in a WP and WBB manner since $W^*$ can be funded in such a manner and the cost of $W^*$ is weakly greater than that of $[\ell]$. From Lemma~\ref{lem: WP-WBB}, it follows that $\sum_{i\in N}\min\{v_i([\ell]), b_i\} \geq C([\ell])$, and hence $\ell\in \Omega$. Thus we have,
  \begin{align*}
  \SW(W) &\geq \SW([\ell])\\
   & = \sum_{i\in N} v_i([\ell]) - C([\ell]) = \sum_{i\in N} v_i(W^*) - C([\ell]) \\ 
  &\geq  \sum_{i\in N} v_i(W^*) - C(W^*) = \SW(W^*)
  \end{align*}
  where the first inequality follows since $\ell\in \Omega$ and $W=[k^*]$ with $k^* = \arg\max_{k\in \Omega} \SW ([k])$. Also note that for each $k\in \Omega$, the set of projects $[k]$ can be funded in a WP and WBB manner by Lemma~\ref{lem: WP-WBB}, and thus $W$ can also be funded in such a way. Hence, $W$ is indeed a UWO-WP solution.
\end{proof}

\section{Experiments}
\label{sec:experiments}

\begin{figure}[!t]
    \centering
    \includegraphics[width=\columnwidth]{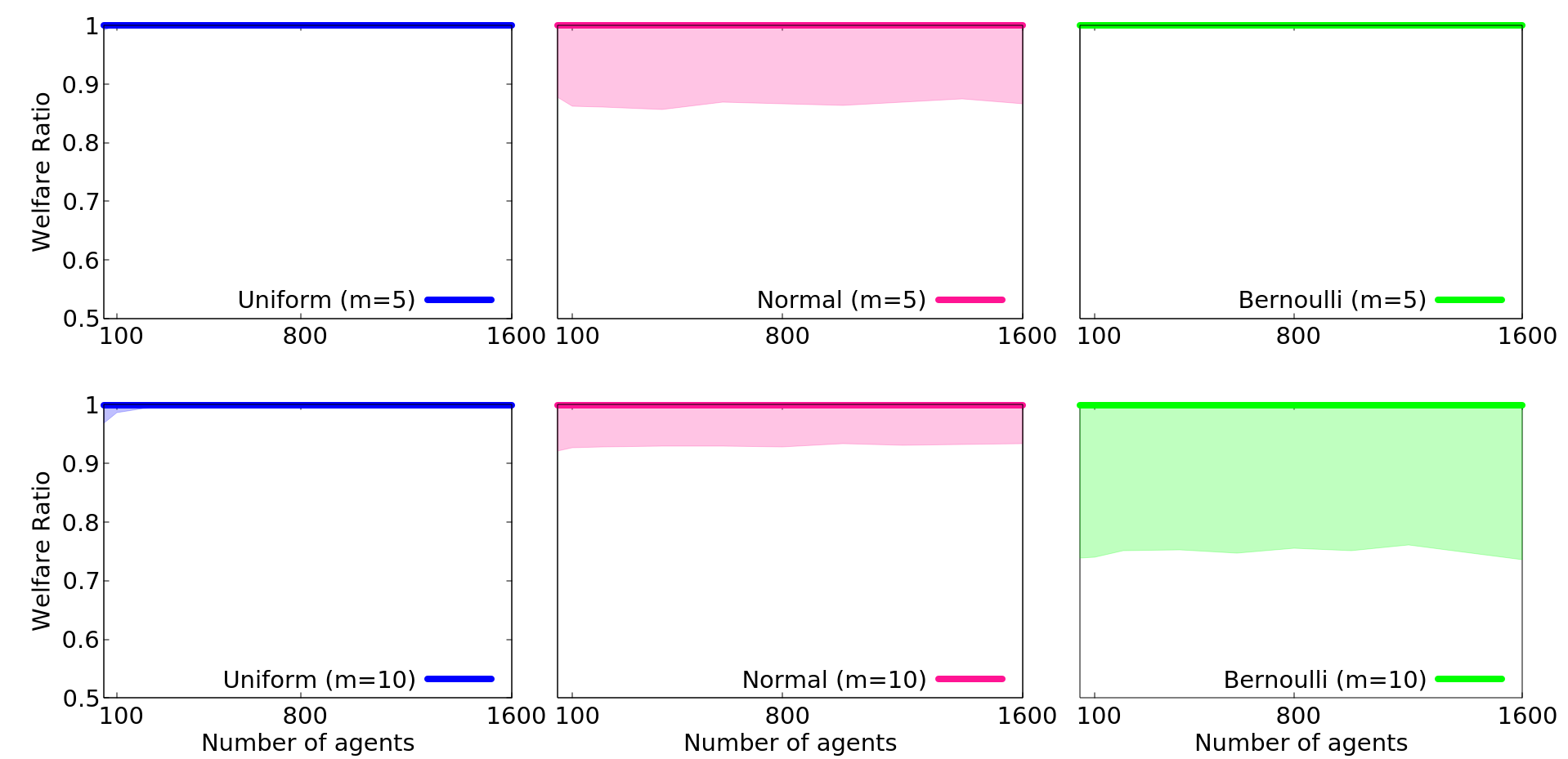}
    \caption{Welfare Ratio $\SW(greedy)/\SW(\text{UWO-WP})$ for $m=5$ (top) and $m=10$ (bottom). Solid line is the median welfare ratio across 10,000 instances; shaded region indicates the range of welfare ratio for 90\% of the instances}
    \label{fig:expt}
\end{figure}

We know that UWO-WP is not just hard but inapproximable for general additive valuations. In this section, we use real PB elections and synthetically generated data to construct simulations evaluating how well a simple greedy algorithm approximates UWO-WP welfare.

In the algorithm we henceforth refer to as $greedy$, we sort the projects in descending order of bang per buck, i.e. welfare obtained per cost. We pick a new project in this order, provided it does not violate WP or WBB, and continue in this fashion until no such project remains. For each of our simulation instances, we evaluate $greedy$'s performance by calculating the \textit{welfare ratio} between the greedy solution and the optimal solution, which is computed through brute force search, i.e. $\frac{\SW(greedy)}{\SW(\text{UWO-WP})}$.

\subsection*{Pabulib Experiments}
We begin by constructing instances of our problem using real PB election data from Pabulib \citep{SST20a}. We convert all approval elections in the dataset to instances of our problem, restricting only to those elections with $20$ projects or fewer due to computational constraints associated with calculating the optimal solution. To construct the instance, we distribute the election's budget evenly among all voters and assume that agents derive non-zero utility from a project if and only if they approve of that project in their ballot. We then scale all agents' utilities up so that the sum of all agents' utilities for the grand bundle is equal to the total cost of the grand bundle. We choose this scaling to avoid creating instances with trivial optima. In the absence of more detailed valuation information, we consider this assumption fair as trivial instances would be unlikely to warrant the initiation of a PB process to begin with.

For the $163$ elections meeting the criteria laid out above, we find that the $greedy$ algorithm achieves a welfare ratio of greater than $0.98$ in $50\%$ of elections and achieves a welfare ratio of greater than $0.75$ in $90\%$ of elections.

\subsection*{Synthetic Experiments}
Having found that $greedy$ performs quite well on data from real PB elections, we set out to test $greedy$ on a synthetic dataset with more varied input assumptions and a much larger sample size. The instances we use in our synthetic simulations are of the three following types:
\begin{enumerate}
    \item For \textit{uniform} instances, we sample agent valuations from a Uniform distribution, $v_{ij} \sim \mathcal{U}[0,1]$.
    \item For \textit{normal} instances, we uniformly sample a mean and variance for each project to define a Normal distribution from which all agents' valuations are drawn for that project. That is, for each $j\in M$, we sample $\mu_j \sim \mathcal{U}[0,1]$ and $\sigma_j \sim \mathcal{U}[0,0.5]$, and draw $v_{ij}\sim \mathcal{N}(\mu_j, \sigma_j)$, shifting to ensure positive valuations.
    \item For \textit{bernoulli} instances, for each project $j$, we first uniformly sample a probability, $p_j \sim \mathcal{U}[0,1]$, and a constant, $\nu_j \sim \mathcal{U}[0,1]$. We then draw agents' valuations from a Bernoulli distribution with success probability $p_j$ and multiply the result by $\nu_j$, i.e. $v_{ij}\sim \nu_j \cdot {\rm Bernoulli}(p_j)$.
\end{enumerate} 

For all instances, project costs are sampled uniformly, $C_j \sim \mathcal{U}[0.75\cdot \sum_i v_{ij}, \sum_i v_{ij} ]$. By ensuring costs are close to the sum of agents' valuations for each project, we make WP more difficult to satisfy and avoid trivial instances. The total budget is set to half of the total of project costs and individual budgets are fractions of this overall budget, drawn uniformly at random and normalized to sum to 1. The \textit{uniform} instance family captures a setting where agents' valuations are perfectly random and not correlated with one another. Since this is not the case in most realistic settings, we use the \textit{normal} instance type to capture scenarios in which there is a degree of consensus amongst all agents on the value of each project. Lastly, we study the \textit{bernoulli} instance type, which captures approval-based utilities, a modeling assumption well-studied in PB \citep{ALT18a, NtPf19}.

For each combination of instance family, $m = \{5, 10\}$, and $n$ ranging from 10 to 1,600, we generate 10,000 synthetic instances. The results are reported in Figure~\ref{fig:expt} for each value of $n$ and $m$, where the solid line is the median welfare ratio across all 10,000 instances and the shaded region below outlines the $10^{th}$ percentile welfare ratio across the same instances. Figure \ref{fig:expt} shows that $greedy$ obtains optimal welfare (i.e. UWO-WP) in at least $50\%$ of instances for each instance type and number of agents and projects. The shaded region of the charts shows that the $10^{th}$ percentile performance of $greedy$ drops significantly below the optimum for \textit{normal} instances with $m=5$ and \textit{bernoulli} instances with $m=10$. Nevertheless, even for these instances, we point out that $greedy$ achieves at least $70\%$ of optimal welfare in $90\%$ of instances. 

Altogether, we observe that the $greedy$ algorithm achieves close-to-optimal performance in the vast majority of a wide variety of instances with reasonable input assumptions. Such performance from an algorithm as unsophisticated as $greedy$ comes as a notable surprise given our strong inapproximability results.

\section{Conclusion}
In this paper, we laid the groundwork for an important and general PB framework which captures resource pooling. We focused on welfare maximization subject to minimal participation axioms that agents would expect the mechanism to satisfy and proved several hardness and inapproximability results. We also identified and gave explicit algorithms for two natural classes of instances. 

Our work leaves interesting open problems, and provides a framework for further exploration of PB with resource pooling. Specifically, while we showed an FPTAS for UWO-WP in the laminar single-minded setting, it remains open whether there exists a bounded approximation algorithm for UWO-WP in the general single-minded case. Furthermore, future work can focus on other restrictions which bypass our inapproximability. Lastly, while our participation axiom captures some notion of fairness, it would be interesting to explore other fairness concepts within our framework.\\

\noindent \textbf{Acknowledgements} \\
\noindent Mashbat Suzuki was partially supported by the ARC Laureate Project FL200100204 on "Trustworthy AI".

\bibliographystyle{ACM-Reference-Format.bst}
\bibliography{ref,PBC-haris_master,PBC-aziz_personal}

\clearpage
\appendix


\section{Proof of Remark~\ref{rmk:uwo_fptas}}  \label{app:uwo_fptas}
\rmkuwo*
\begin{proof}
  To show NP-hardness of UWO with a single agent with additive valuation, we first reduce the problem from the NP-complete $\operatorname{KNAPSACK}$ problem: given $m$ items with weights $w_1,\cdots, w_m$ and values $s_1,\cdots, s_m$, capacity $B$ and value $V$, does there exist a subset $F\subseteq \{1,\cdots,m\}$ such that $\sum_{j\in F} w_j \leq B$ and $\sum_{j\in F} s_j \geq V$? Given an instance of the  $\operatorname{KNAPSACK}$ problem, we build an instance of the $\operatorname{UWO}$ problem as follows:
\begin{itemize}
  \item The set of projects is the set of items. Each project has cost $C_j=w_j$.
  \item The budget of the agent is $B$.
  \item The value of each project to the agent is $v_j = s_j + w_j$.
\end{itemize}
We can see that the social welfare obtained by choosing a set of projects $F$ is exactly equal to the value of choosing set of items $F$ in the $\operatorname{KNAPSACK}$ problem i.e, $\SW(F)=\sum_{j\in F}v_j-C_j=\sum_{j\in F} s_j $. Also note  that the WBB condition is satisfied if and only if the capacity constraint is satisfied. It follows that a solution with value at least $V$ exists in the $\operatorname{KNAPSACK}$ problem if and only if there exists a set of projects whose social welfare in the corresponding UWO instance is at least $V$. The NP-hardness follows immediately \citep{GaJo79a}.

We prove existence of an FPTAS for UWO via an approximation-preserving reduction to $\operatorname{KNAPSACK}$. Given an instance of UWO, we build an instance of $\operatorname{KNAPSACK}$ as follows:
\begin{itemize}
  \item The set of items is the set of projects. Each item has weight $w_j = C_j$.
  \item The knapsack capacity is the sum of agent budgets, i.e. $B = \sum_{i\in N} b_i$.
  \item The value of each item is $s_j = \sum_{i\in N} v_{ij} - C_j$.
\end{itemize}
Again, we note that the social welfare obtained by choosing a set of items $F$ is exactly equal to the value of choosing a set of projects $F$ in the $\operatorname{KNAPSACK}$ problem, i.e. 
  \begin{align}
    \sum_{j\in F} s_j & = \sum_{j\in F} \left( \sum_{i\in N} v_{ij} - C_j \right) \nonumber \\
    & =  \sum_{i\in N} \sum_{j\in F} v_{ij} - \sum_{j\in F} C_j \nonumber \\
    & = \sum_{i\in N} v_i(F) - \sum_{j\in F} C_j \nonumber \\
    & = SW(F) \nonumber
  \end{align}
It is clear to see that the knapsack capacity constraint is satisfied if and only if the corresponding set of projects are WBB. We have shown that a set of projects is feasible if and only if the corresponding set of items is feasible and equated the objective functions. Thus, since $\operatorname{KNAPSACK}$ admits an FPTAS, we have that UWO admits an FPTAS in the additive valuations setting. 
\end{proof}

\section{Proof of Lemma~\ref{lem:maxpe1}}
\label{app:maxpe1}
\lemmaxpeone*
\begin{proof}
    We give a reduction from the NP-complete Exact Cover by 3-Sets (X3C) problem: given a set $Z=\{z_1, \ldots, z_{3q}\}$ (with cardinality a multiple of $3$) and a family $S$ of 3-element subsets (triples) of $Z$, the problem asks if there is a subfamily $S^{\prime}\subseteq S$ such that every element in $Z$ is contained in exactly one triple of $S^{\prime}$. Given any instance of X3C, we construct an instance of $\maxpe\geq1$ such that we have:
    \begin{itemize}
        \item an agent $i\in N$ for each $z_i\in Z$, each with identical budget of $b_i = \frac{1}{3} + \frac{1}{n}$
        \item a project $j\in M$ for each triple $s_j\in S$
        \item agent valuations $v_{ij} = \frac{1}{3} + \frac{1}{n}$ if $z_i\in s_j$ and $0$ otherwise.
    \end{itemize}
    The $\maxpe\geq1$ decision problem becomes
    \begin{align*}
        \max_{W\subseteq M} \ \  \sum_{i\in N} \left(\frac{1}{3} + \frac{1}{n}\right) \min(1, \lvert \{j\in W: z_i\in s_j\}\rvert)  - \lvert W \rvert \geq 1
    \end{align*}
    Note that an agent $i\in N$ contributes $0$ to the above summation if a bundle does not contain any project $j$ for which $z_i\in s_j$ and they contribute exactly $\frac{1}{3} + \frac{1}{n}$ otherwise. Thus, the maximum this summation can evaluate to is $n\cdot (\frac{1}{3} + \frac{1}{n}) = \frac{n}{3} + 1$.
    \par If there exists a subfamily $S^{\prime}\subseteq S$ such that every element in $Z$ is contained in exactly one triple of $S^{\prime}$, then the corresponding $\maxpe\geq1$ instance is a $\mathtt{Yes}$ instance. To see this, consider $W=\{j: s_j\in S^{\prime}\}$. Note that $\lvert W \rvert = \frac{n}{3}$ since this is the number of triples required to cover a set of size $n$. By the transformation, each agent attributes non-zero value to exactly one project in $W$. Thus, the $\maxpe$ expression becomes $n\cdot (\frac{1}{3} + \frac{1}{n}) - \frac{n}{3} = 1$. 
    \par In the other direction, if the $\maxpe\geq 1$ instance is a $\mathtt{Yes}$ instance, let $W$ be a bundle for which the excess payment extraction is weakly greater than 1. First note that $\sum_{i\in N} b_i = \frac{n}{3}+1$ so $\lvert W\rvert \leq \frac{n}{3}+1$ by WBB. However, if $\lvert W\rvert = \frac{n}{3}+1$, then the $\maxpe$ expression is upper bounded by $\frac{n}{3}+1 - (\frac{n}{3}+1) = 0$. If, on the other hand, $\lvert W\rvert < \frac{n}{3}$, because each project is valued by exactly 3 agents, the $\maxpe$ expression is upper bounded by $3\cdot \lvert W \rvert (\frac{1}{3} + \frac{1}{n}) - \lvert W \rvert = \frac{3\cdot \lvert W \rvert}{n} < \frac{3(n/3)}{n} = 1$. Thus, $\lvert W\rvert = \frac{n}{3}$. 
    \par Consider the subfamily $S^{\prime} = \{s_j: j\in W\}$. We wish to show $S^{\prime}$ constitutes an exact cover of $Z$. If it does not, it either does not cover $Z$ or it contains two subsets which are not disjoint. In the former case, there is an element $z\in Z$ which is not contained in any triple in $S^{\prime}$ which means there is an agent in the $\maxpe\geq1$ instance whose total valuation is $0$. This means the $\maxpe$ expression is upper bounded by $(n-1)(\frac{1}{3} + \frac{1}{n}) - \frac{n}{3} = 2/3 - 1/n < 1$, which is a contradiction. In the latter case, there exist $s_j, s_k \in S^{\prime}$ such that $s_j\cap s_k \neq \emptyset$, which means there exists an agent in the corresponding $\maxpe\geq1$ instance with non-zero valuations for both $j$ and $k$. Because there are exactly $3$ agents with non-zero valuations for each of the $n/3$ projects in $W$, this means there must be some agent who has zero valuation for every project in $W$ and we reach a contradiction. Therefore, $S^{\prime}$ constitutes an exact cover of $Z$ and the X3C instance is a $\mathtt{Yes}$ instance.
\end{proof}

\section{Proof of Lemma~\ref{lem:mpe_sm_general}}
\lemmaxpesm*
\begin{proof}
    Consider the following ILP: \\
    \begin{equation*}
    \begin{array}{ll@{}ll}
        \text{maximize}  & \displaystyle\sum_{i\in N} \min(b_i, z_i)\cdot x_i  - \displaystyle\sum_{j\in M} C_j\cdot y_j &\\
        \text{subject to}& x_i \leq y_j,    & \forall i\in N, \forall j\in D_i \\
                         & x_i \in \{0,1\}, & \forall i\in N\\
                         & y_j \in \{0,1\}, & \forall j\in M\\
    \end{array}
    \end{equation*}
    We claim that the above ILP describes our $\maxpe$ problem. For any feasible solution to $\maxpe$, $W$, let $x_i=1$ iff $D_i\subseteq W$ and $y_j=1$ iff $j\in W$. It is apparent that the objective function of the ILP is equivalent to that of $\maxpe$. Furthermore, for each $i\in N$ with $x_i=1$, we have $y_j=1 \hspace{0.2cm} \forall j \in D_i \subseteq W$. Thus, the constraints are satisfied and we have a corresponding feasible solution to our ILP.
    \\
    Notice that $\min(b_i, z_i)$ is a constant for each agent $i$, and our ILP is thus identical to that given by \citet{BCG+15a}. As shown by \citet{BCG+15a}, the LP relaxation of this ILP yields a totally unimodular constraint matrix and thus the $\maxpe$ problem is polynomial time solvable. 
\end{proof}

\section{Set-Union Knapsack Problem FPTAS}
\label{app:sukp}
The \textit{set-union knapsack problem} (SUKP) is a well-studied generalization of the classical knapsack problem, wherein items have values and are composed of elements, which may occur in multiple items and are associated with costs \citep{GNY94a,Arul14a,WeHa19a,HXW+18a}. The SUKP solution maximizes the total value of the items in the knapsack while ensuring the weight of the elements in the items' union does not exceed the given knapsack capacity. 
In this section, we show that under the laminar item set family restriction, the problem admits an FPTAS. To do so, we provide an approximation-preserving reduction to \textit{knapsack on chordal conflict graphs} (defined in Section \ref{sec:laminar_sm}).

SUKP takes as input a set of $n$ items and a universe of elements $U$ where each item $i$ corresponds to an \textit{element set}, $L_i$, such that $\bigcup_{i=1}^{n} L_i = U$. Each item $i$ has a non-negative value denoted $v_i$, and each element $j$ has a non-negative weight, $\omega_j$. The knapsack has a capacity of $K$. 

We denote the item set family as $\mathbf{L} = \{L_1, \ldots, L_n\}$, the item values as $\mathbf{v}$, the element weights as $\boldsymbol\omega$, and denote an instance of SUKP as $I=\langle \mathbf{L}, \mathbf{v}, \boldsymbol\omega, K\rangle$. We henceforth use the notation $[n]=\{1,\dots, n\}$. The SUKP problem is to find a subset of items such that the total value is maximized and the weight of the elements in the union of the items does not exceed the capacity. Formally, the solution can be written

\begin{equation}
\begin{aligned}
    \max_{T\subseteq [n]} \quad & \displaystyle\sum_{i\in T} v_i \hspace{0.2cm} \nonumber\\
    \textrm{subject to} \quad & \displaystyle\sum_{j\in \bigcup\limits_{i\in T} L_i} \omega_j \leq K \nonumber \\
\end{aligned}
\end{equation}

\begin{definition}[Laminar SUKP]
    We call an instance of SUKP \textit{laminar} if the items' element sets constitute a laminar set family, i.e. if for every $L_i, L_j \in \mathbf{L}$, the intersection of $L_i$ and $L_j$ is either empty, or equals $L_i$, or equals $L_j$.
\end{definition}

Next, we prove two lemmas which will help us to connect Laminar SUKP with KCG.

\begin{lemma}\label{lem:chordal_sukp}
    Given an instance $I=\langle \mathbf{L}, \mathbf{v}, \boldsymbol\omega, K\rangle$ of laminar SUKP, the undirected graph $G=(V,E)$ where $V=[n]$ and $E=\{(i,j) | L_i\subseteq L_j\}$ is \textit{chordal}.
\end{lemma}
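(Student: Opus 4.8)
The plan is to observe that this statement is structurally identical to \Cref{lem:chordal}: in both cases the graph in question is the ``subset-relation graph'' of a laminar family, with the edge set defined by containment. Since the hypotheses and edge definition for $\mathbf{L}$ coincide exactly with those for $\mathbf{D}$ in \Cref{lem:chordal}, the cleanest route is simply to invoke \Cref{lem:chordal} directly, treating $\mathbf{L}$ as the laminar family. For completeness I would also reproduce the short combinatorial argument, which I sketch below.

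To show chordality from scratch, I would fix an arbitrary cycle $E^c$ in $G$ of length four or more and exhibit a chord. First I would pick any edge $(i,j)\in E^c$ and, by the definition of $E$, assume without loss of generality that $L_i\subseteq L_j$. Since $E^c$ is a cycle, vertex $i$ has a second incident cycle-edge $(i,k)\in E^c$ with $k\neq j$. The presence of $(i,k)$ forces $L_i$ and $L_k$ into a subset relation, and hence $L_i\cap L_k\neq\emptyset$ (using, as is standard for SUKP, that element sets are nonempty). Combining this with $L_i\subseteq L_j$ yields $L_j\cap L_k\supseteq L_i\cap L_k\neq\emptyset$.

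Laminarity of $\mathbf{L}$ then does the essential work: because $L_j\cap L_k\neq\emptyset$, the sets $L_j$ and $L_k$ must themselves satisfy a subset relation, so $(j,k)\in E$. As $j$ and $k$ are both neighbours of $i$ along the cycle, they are non-adjacent on any cycle of length at least four, so $(j,k)$ is a genuine chord. Exhibiting such a chord in every long cycle establishes that $G$ is chordal. I do not expect any real obstacle here, since the argument is identical to that of \Cref{lem:chordal}; the only point requiring a little care is the nonemptiness of the element sets, which is precisely what guarantees that an edge (a containment) implies a nonempty intersection, and this is the hinge on which the chord-producing step turns.
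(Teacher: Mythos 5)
Your proof is correct and takes essentially the same route as the paper: the paper proves \Cref{lem:chordal_sukp} by repeating verbatim the cycle-and-chord argument of \Cref{lem:chordal} with $\mathbf{L}$ in place of $\mathbf{D}$, exactly as in your sketch. Your explicit observation that nonemptiness of the element sets is what lets an edge (a containment) imply a nonempty intersection is a small point of extra care that the paper leaves implicit; otherwise the two arguments coincide step for step.
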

\begin{proof}
    Consider such a graph $G=(V,E)$. Let $E^c\subseteq E$ be a cycle of length four or more. Pick any edge $(i,j)\in E^c$ and assume without loss of generality that $L_i\subseteq L_j$. Now pick any edge $(i,k)\in E^c$, which must exist because $E^c$ is a cycle. We have that $L_i\cap L_k\neq \emptyset$ by the construction of the graph and thus $L_k\cap L_j \neq \emptyset$. Since $\mathbf{L}$ is a laminar set family, this means that $L_j$ and $L_k$ form a subset relation, which implies $(j,k)\in E$. $(j,k)$ is a chord since it connects two vertices in $E^c$ and cannot be in $E^c$ since it would form a triangle, contradicting $E^c$ being a cycle of length four or more. 
\end{proof}

\begin{lemma}\label{lem:sukp_full_soln}
    Given an instance $I=\langle \mathbf{L}, \mathbf{v}, \boldsymbol\omega, K\rangle$ of laminar SUKP and any set $T\subseteq [n]$, let $T^\prime = \{i\in [n] \vert L_i\subseteq \bigcup_{j\in T} L_j\}$. $T^\prime$ is feasible.
\end{lemma}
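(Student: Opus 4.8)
The plan is to reduce the entire claim to a single set identity, namely $\bigcup_{i\in T'} L_i = \bigcup_{j\in T} L_j$. This identity says that passing from $T$ to its closure $T'$ enlarges the \emph{selected item set} without enlarging the \emph{union of their element sets}. Since the only constraint in SUKP is the capacity bound $\sum_{j\in\bigcup_{i\in T}L_i}\omega_j \le K$, and since this quantity depends on a solution purely through its element union, the identity shows that $T'$ charges exactly the same weight to the knapsack as $T$. In other words, the closure operation is \emph{weight-neutral}: it never introduces a new element into the union, so it cannot push the solution over capacity. This is precisely why $T'$ is feasible, and it is also why $T'$ is the natural object to target when decoding a KCG solution back into a SUKP solution, since it adds all the ``free'' items (weakly increasing value, as all $v_i\ge 0$) without consuming any additional weight.

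To establish the identity I would argue the two inclusions separately. For $\bigcup_{j\in T} L_j \subseteq \bigcup_{i\in T'} L_i$, observe that $T\subseteq T'$: every $i\in T$ trivially satisfies $L_i\subseteq\bigcup_{j\in T}L_j$, so $i$ meets the defining membership condition of $T'$; taking unions over the larger index set can only grow the union. For the reverse inclusion $\bigcup_{i\in T'} L_i\subseteq \bigcup_{j\in T} L_j$, note that by the very definition of $T'$ each $i\in T'$ satisfies $L_i\subseteq\bigcup_{j\in T}L_j$, and a union of sets each contained in $\bigcup_{j\in T}L_j$ is itself contained in $\bigcup_{j\in T}L_j$. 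Combining the two inclusions yields the equality, and hence $\sum_{j\in\bigcup_{i\in T'}L_i}\omega_j = \sum_{j\in\bigcup_{j\in T}L_j}\omega_j$, so $T'$ satisfies the capacity constraint and is feasible.

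The argument is elementary, and in fact the main point worth flagging is conceptual rather than technical: one must recognize that the resource consumed by a SUKP solution is determined \emph{entirely} by the union of its element sets, and that $T'$ is defined precisely so as not to enlarge this union. A secondary observation, perhaps surprising given the surrounding development, is that this particular lemma does not use laminarity at all — the weight-neutrality of the closure holds for an arbitrary item set family $\mathbf{L}$. Laminarity becomes essential only elsewhere (for instance, in decomposing the knapsack weight additively over an antichain of chosen items), so the anticipated ``obstacle'' here is really just the temptation to over-engineer a proof for a statement whose content is captured in full by the weight-neutral closure identity.
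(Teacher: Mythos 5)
Your proof is correct and takes essentially the same route as the paper's: both rest on the single observation that every $i\in T^\prime$ satisfies $L_i\subseteq\bigcup_{j\in T}L_j$, so the closure adds no new elements to the union and hence no new weight --- the paper phrases this by adding one item at a time ($T\cup\{i\}$ remains feasible), while you prove the union identity $\bigcup_{i\in T^\prime}L_i=\bigcup_{j\in T}L_j$ in one shot, which is a cosmetic rather than substantive difference. Your two side remarks are also accurate: like the paper's proof, yours implicitly assumes the given $T$ satisfies the capacity constraint (the lemma is only ever invoked on feasible sets), and laminarity indeed plays no role in this particular lemma.
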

\begin{proof}
    Consider $i\in[n]$ such that $i\not\in T$ and $L_i\subseteq \bigcup_{j\in T} L_j$. It suffices to show that $\{T\cup i\}$ is a feasible solution. This is true because $e\in L_i \implies e\in \bigcup_{j\in T} L_j$ so the weights of $i$'s element set are already included in the knapsack constraint.
\end{proof} 

The following theorem makes a novel contribution to the literature around the set-union knapsack problem by identifying a natural restriction which admits an FPTAS. 

\begin{theorem}\label{thm:lam_sukp}
    Laminar SUKP admits an FPTAS.
\end{theorem}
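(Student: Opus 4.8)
The plan is to mimic the approximation-preserving reduction used in the proof of \Cref{thm:uwowp_laminar}, this time mapping a laminar SUKP instance $I=\langle \mathbf{L}, \mathbf{v}, \boldsymbol\omega, K\rangle$ to an instance of knapsack on conflict graphs (KCG) whose conflict graph is the chordal containment graph of \Cref{lem:chordal_sukp}. Concretely, I would set $V=[n]$, take $E=\{(i,j) : i\neq j,\ L_i\subseteq L_j\}$ (chordal by \Cref{lem:chordal_sukp}), set the knapsack capacity to $B=K$, and for each node $i$ define weight $w_i=\sum_{e\in L_i}\omega_e$ and profit $p_i=\sum_{k : L_k\subseteq L_i} v_k$. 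The profit of a node thus bundles the value of every item whose element set is nested inside $L_i$, while its weight is simply the total element weight of $L_i$. Note that both quantities are non-negative, as required by KCG.

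The crucial structural observation, exactly as in \Cref{thm:uwowp_laminar}, is that a set $S\subseteq V$ is a feasible (independent) KCG solution precisely when the element sets $\{L_i\}_{i\in S}$ are pairwise disjoint: by laminarity any two intersecting element sets are nested, and nesting is forbidden by an edge of $E$. Disjointness gives two things at once. First, $\sum_{i\in S} w_i=\sum_{i\in S}\sum_{e\in L_i}\omega_e=\sum_{e\in \bigcup_{i\in S} L_i}\omega_e$, so the KCG capacity constraint is exactly the SUKP weight constraint on the union $\bigcup_{i\in S} L_i$. Second, the groups $\{k : L_k\subseteq L_i\}$ are pairwise disjoint across $i\in S$, so no item value is double counted and $\sum_{i\in S} p_i=\sum_{k\,:\,L_k\subseteq L_i\text{ for some }i\in S} v_k$.

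With these in hand I would establish the two directions. Given a feasible KCG solution $S$, I map it to the SUKP set $T'=\{k : L_k\subseteq \bigcup_{i\in S} L_i\}$, which is feasible by \Cref{lem:sukp_full_soln} (its union, and hence its weight, is unchanged) and whose value is at least $\sum_{i\in S} p_i$. Conversely, starting from any SUKP solution I may assume it is closed by \Cref{lem:sukp_full_soln}; its containment-maximal items form an antichain, hence (by laminarity) a set of pairwise-disjoint, mutually non-nested element sets, i.e. an independent set $S^\star$ whose element sets union to the same set, so that $\sum_{i\in S^\star} p_i$ equals the SUKP value exactly. Combining the two directions shows the optima coincide and, more importantly, that a $(1-\epsilon)$-approximate KCG solution yields a feasible SUKP set of value at least $(1-\epsilon)\,\mathrm{OPT}$, i.e. the reduction is approximation preserving. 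Since \citet{PfSc09a} give an FPTAS for KCG on chordal graphs and the reduction is clearly polynomial-time, the result follows.

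The step I expect to be the main obstacle is verifying the value correspondence in the direction that matters for approximation preservation: making sure that, for a general independent set $S$ (not only for the maximal items of a closed solution), the profit $\sum_{i\in S} p_i$ is genuinely dominated by the value of the closure $T'$ with no hidden double counting, and that no item nested in $\bigcup_{i\in S} L_i$ is lost. This is precisely where laminarity and the disjointness of the chosen element sets must be invoked carefully; everything else (chordality, feasibility equivalence, the polynomial running time, and the appeal to the KCG FPTAS) is routine given \Cref{lem:chordal_sukp} and \Cref{lem:sukp_full_soln}.
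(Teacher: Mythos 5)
Your reduction is the same one the paper uses: identical node set, containment edges, capacity $K$, weights $w_i=\sum_{e\in L_i}\omega_e$, profits $p_i=\sum_{k:L_k\subseteq L_i}v_k$, chordality from \Cref{lem:chordal_sukp}, closure feasibility from \Cref{lem:sukp_full_soln}, and the appeal to the FPTAS of \citet{PfSc09a}. Where you genuinely diverge is the value correspondence, and your version is the more careful one. The paper claims an exact equality $\sum_{i\in S}p_i=\sum_{j\in T'}v_j$ for \emph{every} feasible KCG solution $S$, but this equality can fail: take $L_1=\{a\}$, $L_2=\{b\}$, $L_3=\{a,b\}$ (a laminar family) and the independent set $S=\{1,2\}$; then $\sum_{i\in S}p_i=v_1+v_2$ while the closure $T'=\{1,2,3\}$ has value $v_1+v_2+v_3$, because $L_3$ is contained in $L_1\cup L_2$ without being contained in any single selected set. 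What is true in general is exactly the pair of one-sided statements you prove: the forward inequality $\sum_{j\in T'}v_j\ge\sum_{i\in S}p_i$ (no double counting, since independence plus laminarity forces the selected element sets to be pairwise disjoint), and equality in the converse direction when $S^\star$ is the antichain of containment-maximal items of a closed SUKP solution, which yields $\mathrm{OPT}_{\mathrm{KCG}}\ge\mathrm{OPT}_{\mathrm{SUKP}}$. These two facts are precisely what approximation preservation requires; the paper proves (incorrectly, as an equality) only the forward direction and leaves the converse implicit, so the step you flagged as the ``main obstacle'' is indeed the real crux, and your treatment of it patches a small inaccuracy in the paper's own argument. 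The error in the paper is benign — it errs in the harmless direction, so the FPTAS conclusion stands — but your two-directional write-up is the argument as it should be stated. (One pedantic point: if two distinct items have equal element sets, the containment-maximal items are not literally an antichain under your edge set; pick one representative per distinct maximal set, which loses no profit since each $p_i$ already aggregates all items nested in $L_i$.)
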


\begin{proof}
    Let $I=\langle \mathbf{L}, \mathbf{v}, \boldsymbol\omega, K\rangle$ be an instance of laminar SUKP. We proceed by an approximation-preserving reduction to an instance of KCG on chordal graphs as follows:
    \begin{itemize}
        \item $V = [n]$
        \item $p_i = \displaystyle\sum_{\{j\in [n] | L_j\subseteq L_i\}} v_j$
        \item $w_i = \displaystyle\sum_{e\in L_i} \omega_e$
        \item $B = K$
        \item $(i,j)\in E$ for each distinct $i,j\in [n]$ with $L_i\subseteq L_j$
    \end{itemize}

    Let $S$ be a feasible solution to the instance of KCG reduced from $I$, as given above. We posit that $T^\prime = \{i\in [n] \vert L_i\subseteq \bigcup_{j\in S} L_j \}$ is a feasible solution to $I$. Since $S$ is a feasible solution to KCG, we know
    $$ \sum_{i\in S} w_i \leq B \iff \sum_{i\in S} \sum_{e\in L_i} \omega_e \leq K $$
    Since $\mathbf{L}$ is a laminar set family, we know that the sets are either disjoint or form a subset relation. However, since S is feasible, we know that no two members of $\{L_i\}_{i\in S}$ can form a subset relation and thus all members are disjoint. Thus, we can rewrite the above as
    $$\sum_{e\in \bigcup\limits_{i\in S} L_i} \omega_e \leq K $$
    and we have that $S$ is a feasible solution to $I$. From Lemma \ref{lem:sukp_full_soln}, we can thus conclude that $T^\prime$ is a feasible solution to the SUKP instance $I$.

    We now show that the objective value of the KCG solution, $S$, is equal to that of the resulting SUKP solution $T^\prime$:
    \begin{align}
    \sum_{i\in S} p_i & = \sum_{i\in S} \sum_{\{j\in [n] \vert L_j\subseteq L_i\}} v_j \nonumber \\
    & = \sum_{\{j\in [n]\vert L_j\subseteq \bigcup\limits_{i\in S} L_i \}} v_j \nonumber \\
    & = \sum_{j\in T^\prime} v_j \nonumber
    \end{align}

    It is apparent that the reduction given above runs in polynomial time. Furthermore, by Lemma \ref{lem:chordal_sukp} we have that $G$ is a chordal graph and thus the reduced instance admits an FPTAS \citep{PfSc09a}. Since our reduction equated the objective values of each problem, it is approximation-preserving and we therefore have that Laminar SUKP admits an FPTAS.
\end{proof}

\end{document}